\newtheorem{theorem}{Theorem}[section]
\newtheorem{lemma}[theorem]{Lemma}
\newtheorem{corollary}[theorem]{Corollary}
\theoremstyle{definition}
\newtheorem{definition}[theorem]{Definition}
\newtheorem{example}[theorem]{Example}
\theoremstyle{remark}
\newtheorem{remark}[theorem]{Remark}
\numberwithin{equation}{section}
\newcommand{\Lu}{L\~u\hspace{-.25em}\raisebox{-.1 em}{'}}
\newcommand{\String}{\mathcal{L}}
\newcommand{\eps}{\varepsilon}
\newcommand{\N}{\mathbb N}
\newcommand{\Z}{\mathbb Z}
\newcommand{\Q}{\mathbb Q}
\newcommand{\R}{\mathbb R}
\newcommand{\Com}{\mathbb C}
\newcommand{\mcc}{\mathcal{C}}
\newcommand{\Dimensions}{\mathcal D}
\newcommand{\mcl}{\mathcal L}
\newcommand{\Rerror}{\mathcal R}
\newcommand{\mfr}{\mathfrak{R}}
\DeclareMathOperator{\res}{res}
\begin{document}

 \title[Minkowski dimension  and tube formulas for $p$-adic fractal strings]{Minkowski dimension  and explicit tube formulas for $p$-adic fractal strings} 

\author{Michel L. Lapidus}
\address{Department of Mathematics, University of California, Riverside, CA 92521-0135}
\email{lapidus@math.ucr.edu}

\thanks{The work of the first author (MLL) was partially supported by the US National Science Foundation (NSF) under the research grants DMS-0707524 and DMS-1107750, and by the Institut des Hautes Etudes Scientifiques (IHES) in Paris/Bures-sur-Yvette, France, where the first author was a visiting professor while part of this work was completed, 
as well as by the Burton Jones Endowed Chair in Pure Mathematics 
(of which MLL was the chair holder at the University of California, Riverside, during the completion of this paper). 
}

\author {L\~u' H\`ung  }
\address{Department of Mathematics, Hawai`i Pacific University, Honolulu, HI 96813-2785}
\email{l2h5ngt6m2@gmail.com}
\thanks{The research of the second author (LH) was partially supported by the Trustees' Scholarly Endeavor Program at Hawai`i Pacific University.}

\author {Machiel van Frankenhuijsen}
\address{Department of Mathematics, Utah Valley University, Orem, UT 84058-5999}
\email{vanframa@uvu.edu}

\subjclass[2010]{Primary 11M41, 26E30, 28A12, 32P05, 37P20; Secondary 11M06, 11K41, 30G06, 46S10, 47S10, 81Q65.}

\date{\today}

\keywords{Fractal geometry, $p$-adic analysis,   $p$-adic fractal strings, zeta functions, complex dimensions, Minkowski dimension, fractal tubes formulas, $p$-adic self-similar strings, 
Cantor, Euler and Fibonacci strings. }

\begin{abstract}
The theory of complex dimensions describes the oscillations in the geometry (spectra and dynamics) of fractal strings. Such geometric oscillations can be seen most clearly in the  explicit volume formula for the tubular neighborhoods of a $p$-adic fractal string 
$\mathcal{L}_p$, expressed in terms of the underlying complex dimensions.
The general fractal tube formula obtained in this paper is illustrated by several examples, including the nonarchimedean Cantor and Euler strings.  
Moreover, we show that the Minkowski dimension of a $p$-adic fractal string coincides with the abscissa of convergence of the geometric zeta function associated with the string, as well as with the asymptotic growth rate of the corresponding geometric counting function.
The proof of this new result can be applied to both real and $p$-adic fractal strings and hence, yields a unifying explanation of a key result in the theory of complex dimensions for fractal strings, even in the archimedean (or real) case. 
\end{abstract}

\maketitle

\tableofcontents

\begin{quote}
{\em
Nature is an infinite sphere of which the center is everywhere and the circumference nowhere.}
\hspace{\stretch{1}} Blaise Pascal (1623--1662)
\end{quote}

\section{Introduction} 

An ordinary real (or archimedean) fractal string is a bounded open subset of the real line, with fractal boundary.
It provides a complementary perspective to the notion of a self-similar fractal, in the sense that 
every self-similar string determines a self-similar set in $\mathbb R$, viewed as the boundary of the string. 
Moreover, it is noteworthy that the geometric zeta function of a fractal string determines the fractal (i.e., Minkowski) dimension of the corresponding fractal set. 
Following the examples of the $a$-string and of the archimedean Cantor string given by the first author in~\cite{Lap1, Lap, Lap3},
the notion of an archimedean fractal string was conceived and defined by Michel Lapidus and Carl Pomerance in their investigation of the one-dimensional Weyl--Berry conjecture for fractal drums and its connection  with the Riemann zeta function ~\cite{LapPo}.
The Riemann hypothesis turned out to be equivalent to the solvability of the corresponding inverse spectral problem for fractal strings, as was established by Michel Lapidus and Helmut Maier in ~\cite{LapMa}.
The heuristic notion of complex dimensions then started to emerge and was used in a crucial way, at least heuristically, in their spectral reformulation of the Riemann hypothesis.
The precise notion of complex dimensions, defined as the poles of a certain geometric zeta function associated with the fractal string, was crystallized and rigorously developed by Michel Lapidus and Machiel van Frankenhuijsen in the research monograph {\em Fractal Geometry and Number Theory\/} \cite{L-vF1} and then significantly further extended in the book
{\em Fractal Geometry, Complex Dimensions and Zeta Functions\/}~\cite{L-vF2}.

The work of Lapidus and Maier mentioned above can be summarized as follows:
\begin{quote}\em
The inverse spectral problem for a fractal string can be solved if and only if its dimension is not $1/2$.
\end{quote}
The inverse spectral problem for a fractal string is not solvable in dimension 1/2 because the Riemann zeta function $\zeta(s)=1+1/2^s+1/3^s+\cdots$ 
vanishes (infinitely often) on the critical line $\Re(s)=1/2$. Therefore, the inverse spectral problem  is solvable in dimension $D\neq 1/2$ if and only if the Riemann zeta function does not vanish off the critical line $\Re(s)=1/2$ or, equivalently, if and only if the Riemann hypothesis is true. 
 In order to understand why dimension $1/2$ is so special, Michel
Lapidus ~\cite{Lap2} (building, in particular, on earlier work in ~\cite{LapPo} and in ~\cite{LapMa}), as well as Michel Lapidus and Machiel van Frankenhuijsen \cite{L-vF2}, were led to a definition of the dual of a fractal string,
interchanging the dimensions $D$ and $1-D$. 
A partial answer to the question why $1/2$ is singled out is that if a fractal string is self-dual,
then its dimension is $1/2$.
The concept of the dual of a fractal string provides a geometric way to take the functional equation of the Riemann zeta function into consideration in the theory of complex dimensions.
Such considerations would not be complete if they did not also involve the Euler product of the Riemann zeta function.
The spectral operator was 
introduced semi-heuristically in~\cite{L-vF2} as the map that sends the geometry of a fractal string onto its spectrum. Formally, the spectral operator admits an (operator-valued) Euler product. In \cite{HerLap, HerLap2, HerLap3}, Hafedh Herichi and Michel Lapidus have rigorously defined and studied the spectral operator, within a proper functional analytic setting. They have also reformulated the above criterion for the Riemann hypothesis in terms of a suitable notion of invertibility of the spectral operator; see also \cite{Lap4, Lap5} for a corresponding asymmetric criterion, expressed in terms of the usual notion of invertibility of the spectral operator. Furthermore, they have shown that the (operator-valued or ``quantized'') Euler product for the spectral operator also converges {\em inside\/} the critical strip $0 < \Re(s) < 1,$ where all the (nontrivial) complex zeros of the Riemann zeta function reside.

In order to extend the framework of the theory of complex dimensions,
with an aim towards applying ideas and  techniques from number theory to the inverse spectral problem,
it is therefore natural to attempt developing a theory of $p$-adic fractal strings,
and then globally,
an ad\`elic theory of fractal strings. Further laying out the foundations for such a theory is one of the main long-term goals of this paper.

We note that  nonarchimedean $p$-adic  analysis has been used in various areas of mathematics, such as arithmetic geometry, number theory and representation theory, as well as  of mathematical and theoretical physics, such as string theory,  cosmology, quantum mechanics, relativity theory, quantum field theory, statistical and condensed matter physics; 
see, e.g., \cite{Drag, Dykkv, Ulam, RTV, VVZ,} and the relevant references therein.
(See also, e.g., \cite{Ben, BGPW, Ber} and \cite{Duc, PB}.)
In \emph{Number theory as the ultimate physical theory,}\cite{Vol},
Igor V. Volovich has suggested that $p$-adic numbers  can possibly be used in order to describe the geometry of spacetime at very high energies
 (and hence, very small scales, i.e., below the Planck or the string scale) 
  because the measurements  in the `archimedean' geometry of spacetime at fine scales do not have any certainty.
Furthermore,
several authors, including Stephen W. Hawking, have also suggested that the fine scale structure of spacetime may be fractal; see, e.g.,~\cite{GibHaw,HawIs,Lap2,Not,WheFo}.
Therefore, a geometric theory of $p$-adic fractal strings and their complex dimensions might be helpful in the  quest to explore the geometry and fine scale structure of spacetime at high energies. 

On the other hand, in the book entitled \emph{In Search of the Riemann Zeros: Strings, fractal membranes and noncommutative spacetimes} ~\cite{Lap2},
Michel Lapidus has suggested that fractal strings and their quantization, fractal membranes, may be related to aspects of string theory and that $p$-adic (and possibly, ad\`elic) analogs of these notions would be useful in this context in order to better understand the underlying noncommutative spacetimes and their moduli spaces ([\citen{Lap2,LapNe}]).  The theory of $p$-adic fractal strings, once suitably `quantized',  may be helpful in further developing some of these ideas    and eventually providing a framework for unifying the real and $p$-adic fractal strings and membranes.

In this paper, we further develop the geometric theory of $p$-adic (or nonarchimedean)  fractal strings, which are  bounded open subsets of the $p$-adic  $\Q_p$ with  a fractal ``boundary'', along with the associated theory of complex dimensions and, especially, of fractal tube formulas in the nonarchimedean setting.  
 This theory, which was first developed by Michel Lapidus and 
\Lu\ H\`ung  in~\cite{LapLu1,LapLu2,LapLu2.5}, as well as later, by those same authors and Machiel van Frankenhuijsen in~\cite{LapLu3},  extends the theory of real (or archimedean)  fractal strings and their complex dimensions  in a natural way.
Following~\cite{LapLu1,LapLu2, LapLu2.5,LapLu3},
we introduce suitable geometric zeta functions for $p$-adic fractal strings whose poles play the role of complex dimensions for the standard real fractal strings. 
We also discuss the explicit fractal tube formulas  in the general case of (languid) $p$-adic fractal strings and in the special case of $p$-adic self-similar strings. Throughout this paper, these various results are illustrated in the case of  the nonarchimedean self-similar Cantor and the Fibonacci strings (introduced in~\cite{LapLu1,LapLu2}), as well as in the case of the 
  $p$-adic Euler string (introduced in~\cite{LapLu2.5,LapLu3}), which (strictly speaking) is not self-similar. 
Some particular attention is devoted to the  3-adic Cantor string (introduced and studied in~\cite{LapLu1}),
 whose `metric' boundary is the  3-adic Cantor set~\cite{LapLu1}, which is naturally homeomorphic to the classic ternary Cantor set. 
 
The rest of this paper is organized as follows:
In \S\ref{pfs} and \S\ref{GZF},
 we recall the definition of an arbitrary $p$-adic fractal string
 along with its associated geometric zeta function and complex dimensions,
 as well as the corresponding notions of Minkowski dimension and content. Furthermore, in \S\ref{inner tube}, we discuss the important,  but more technical, question of how to suitably define and calculate the volume
of the `inner' $\eps$-neighborhood  of a $p$-adic fractal string. The definitions and proofs given in \S\ref{inner tube} depend crucially on the nonarchimedean (specifically, $p$-adic) nature of the underlying geometry.
In \S\ref{Mdimension}, we then introduce a proper notion in this context of Minkowski dimension and Minkowski content, building on the results of \S\ref{inner tube}. Moreover, we obtain the nonarchimedean analog of a key result in (archimedean) fractal string theory, showing that the Minkowski dimension of a $p$-adic fractal string coincides with the abscissa of convergence of its geometric zeta function. In the process, we show that this common value coincides with the asymptotic growth rate of the geometric counting function of the fractal string, a result which is new even in the archimedean setting, even though it may be implicit in \cite{LapPo}. Our proof also provides a new and unified derivation of the archimedean result for real fractal strings (\hspace{-1.3mm} \cite{Lap, Lap3}, \cite{L-vF1, L-vF2}), by placing the archimedean and the nonarchimedean settings on the same footing; see \S\ref{TRC}.
In \S\ref{explicit tf},
we then use our previous results (in \S\ref{inner tube} and \S\ref{Mdimension}) to express the volume of the inner tube of a $p$-adic fractal string as an infinite sum over all the underlying complex dimensions,
thereby obtaining a nonarchimedean analog of the `fractal tube formula' obtained for real (or archimedean) fractal strings in ~\cite{L-vF1,L-vF2}. (See, especially, \cite[Ch.~8]{L-vF2}.)
We illustrate this formula in \S\ref{explicit tf} by providing (as well as deriving via a direct computation) the fractal tube formula for the $p$-adic Euler string,
 the definition of which is given in \S\ref{Euler string}. Many further illustrations of our nonarchimedean fractal tube formula are provided in \cite{LapLu3} for the important case of general $p$-adic self-similar strings, including the 3-adic Cantor string (Example \ref{cantor volume} below) and the 2-adic Fibonacci string. 
In \S\ref{towards}, we briefly discuss possible future research directions connected with the theory developed in this paper and its predecessors, \cite{LapLu1, LapLu2, LapLu2.5, LapLu3}. 
Finally, in \S\ref{epilogue}, we reveal the existence of $p$-adic fractal strings of any rational dimension between $0$ and $1$ and a possible connection between their construction and the Riemann hypothesis for the Riemann zeta function. We also discuss some constructions of ad\`elic fractal strings and a geometric zeta function for the ad\`elic Euler--Riemann string.

For more information about the theory of fractal strings (or sprays) and their complex dimensions, beside the books \cite{L-vF1}, \cite{L-vF2}, \cite{LapRaZu1} and \cite{Lap2}, we refer to \cite{Elmr}, \cite{Fal2}, \cite{Ham}, \cite{HeLap, HerLap, HerLap2, HerLap3}, \cite{Kom}, \cite{Lap, Lap3}, \cite{Lap4, Lap5, Lap6, 
Llr, LapLu1, LapLu2, LapLu2.5, LapLu3, LapMa, LapNe, LapPe2, LapPe1, LPW, LapPo, LapPo3}, \cite{LapRaZu1.5, LapRaZu2, LapRaZu3, LapRaZu3.5, LapRaZu4, LapRo1}, \cite{MorSepVi}, \cite{Ol}, \cite{Pe}, as well as the relevant references therein.

\section{Nonarchimedean Fractal Strings}\label{pfs}

\subsection{$p$-adic Numbers}
\label{S:p-adic numbers}

Given a fixed prime number $p$,
 any nonzero rational number $x$ can be written as $x=p^v\cdot a/b$,
 for some integers $a$ and $b$ and a unique exponent $v\in \Z$ such that $p$ does not divide $a$ or $b$.
The {\em $p$-adic absolute value\/} is the function $|\cdot|_p\colon\Q \rightarrow [0,\infty)$ given by $|x|_p=p^{-v}$ and $|0|_p=0$.
It satisfies the {\em strong triangle inequality\/}:
 for every $x,y\in \Q$,
$$
|x+y|_p\leq \max\{|x|_p, |y|_p\}.
$$
Relative to the $p$-adic absolute value,
 $\Q$ does not satisfy the archimedean property because for every $x\in \Q$, $|nx|_p$ will never exceed $|x|_p$ for any $n\in \N$. 
The metric completion of $\Q$ with respect to the $p$-adic absolute value $|\cdot|_p$ is the field of $p$-adic numbers $\Q_p$.
More concretely,
every $p$-adic number $z\in \Q_p$ has a unique $p$-adic expansion
$$
z=a_{v}p^{v} + \cdots + a_0 + a_1p + a_2 p^2 + \cdots,
$$
for some  $v\in \Z$ and  digits  $a_i \in \{0, 1, \dots, p-1\}$ for all $i\geq v$ and $a_v\neq0$.
An important subset of $\Q_p$ is the unit ball,
 $\Z_p=\{x\in \Q_p\colon |x|_p\leq 1\}$,
 which can also be represented as follows:
\[
\Z_p=\left\{ a_0 + a_1p + a_2 p^2 + \cdots\colon a_i \in \{0, 1, \dots, p-1\} \text{ for all } i\geq 0 \right\}.
\]
Using this $p$-adic expansion,
 one sees that
\begin{equation}
\Z_p=\bigcup_{a=0}^{p-1} (a+p\Z_p),
\label{decomposition}
\end{equation}
where $a+p\Z_p=\{y\in \Q_p\colon |y-a|_p\le 1/p\}$.
Thus the $p$-adic ball $\Z_p$ is self-similar to $p$ scaled (by the factor 1/$p$) copies of itself. 
Note that $\Z_p$ is compact and thus complete.
Also,
 $\Q_p$ is a locally compact group,
 and hence admits a unique translation invariant Haar measure $\mu_H,$
normalized so that $\mu_H(\Z_p)=1$.
In particular,
 $ \mu_H(a+p^n\Z_p)=p^{-n}$ for every $n\in \Z$. For general references on $p$-adic analysis, we point out, e.g.,
\cite{Kob, Neu, ParSh,Rob,Sch,Ser}.

 \begin{remark}\label{p-adic analysis}
(a)
The distance $d_p$ defined on $\Q_p$ by $d_p(x,y)=|x-y|_p$ is called an {\em ultrametric,}
 since it satisfies the counterpart of the above strong triangle  inequality: 
\begin{equation}\label{ultrametric}
d_p(x,z) \leq \max\{d_p(x,y), d_p(y,z)\}
\end{equation}
for all $x,y,z \in \Q_p$.
Consequently,
 every triangle in $\Q_p$ is isosceles with the two longer sides having the same length:
\begin{gather}\label{E:isosceles}
\text{If }d_p(x,y)>d_p(y,z)\text{ then }d_p(x,z)=d_p(x,y).
\end{gather}
It follows that the center can be chosen anywhere  within the $p$-adic ball ~$B$.
Moreover,
 given any two balls $B_1$ and $B_2$,
 either they are disjoint or one is entirely contained in the other (i.e.,
 $B_1 \subseteq B_2$ or $B_2 \subseteq B_1$).
These special properties are common to all ultrametric spaces (i.e.,
 all metric spaces for which the ultrametric triangle inequality (\ref{ultrametric}) holds).\smallskip

(b)
By definition,
 $\Z_p$ is the (closed) unit ball of ($\Q_p, d_p$).
Moreover,
 $\Z_p$ has the remarkable property of being a ring (since for all $x,y$ in~$\Z_p$,
by~\eqref{ultrametric} again,
$|x+y|_p\leq \max(|x|_p,|y|_p) \leq 1$,
and $|xy|_p=|x|_p|y|_p\leq 1$). 
This is to be contrasted with the fact that $[-1,1]$,
 the unit ball of $\R$,
 is not stable under addition (although it is obviously stable under multiplication);
see \cite{Har}. 
Finally,
 since translations are homeomorphisms,
 every closed ball $B=B(a,r)$ in $\Q_p$ with center $a$ has a radius $r$ of the form $r=p^n$,
\begin{equation}\label{ball}
B(a,r)=a+p^{-n}\Z_p=\{x\in \Q_p \colon  |x-a|_p\leq r\}
\end{equation}
for some $r\in p^{\Z}$,
 the valuation group of the nonarchimedean field $\Q_p$.
We leave it to the reader to investigate the converse statement according to which every convex subset of $\Q_p$ is a metric ball (i.e.,
 an interval);
 see, e.g.,
\cite{Sch}.\smallskip

(c) ($p$-adic intervals).
In the sequel (as well as in part of the literature on $p$-adic analysis,
 see,
 e.g.,
\cite{Kob}),
 the metric balls $B=a+r\Z_p$ 
(with $a\in \Q_p$ and $r\in p^{\Z}$,
 as in (\ref{ball}) just above),
 are sometimes called the `intervals' of~$\Q_p$.
 Note that they are not connected, in the usual topological sense,
 but that they are `convex',
 in the following sense:
 for each $x,y\in B$ and $\alpha \in \Z_p$,
 we have that $\alpha x + (1-\alpha)y\in B$. (Here and henceforth,
 it is useful to think of $\Z_p\subset \Q_p$ as being the analogue of the unit interval $[0,1]\subset \R$, rather than of $[-1,1]$.)\smallskip

(d) (The archimedean/nonarchimedean dichotomy).
A beautiful and classical theorem of Alexander Ostrowski states that each nontrivial absolute value  on
 the field of rational numbers ~$\Q$,
 is  either  equivalent to 
 the standard archimedean absolute value on $\Q$
 or to
 the nonarchimedean $p$-adic absolute value $|\cdot|_p$ for some prime $p$. (Recall that two absolute values are said to be equivalent if they induce the same topology on~$\Q$; this is the case if and only if one is a power of the other.)
 Therefore, infinitely many completions of $\Q$ (one for each prime $p$) are nonarchimedean and $\R$ is the only  completion of $\Q$ that is archimedean.  
For this reason,
 one sometimes writes~$\R=\Q_{\infty}$ and refers to (the equivalence class of) the absolute value~$|\cdot|$ as the `place at infinity',
 associated with the `prime at infinity'  or the `real prime';
 see~\cite{Har}. (We note that Ostrowski's Theorem is usually expressed in terms of valuations rather than of absolute values.
Accordingly,
 a {\em place\/} of $\Q$ is generally defined as an equivalence class of valuations on $\Q$.) With this notation in mind,
 we see that the field~$\Q_{ \infty}$ is archimedean,
 whereas for any (finite) prime $p$,
 $\Q_p$ is a nonarchimedean field. 
The theory of \mbox{$p$-adic} fractal strings developed in~\cite{LapLu1,LapLu2,LapLu2.5,LapLu3} is aimed,
 initially,
 at finding suitable definitions and obtaining results that parallel those corresponding to the theory of real (or archimedean) fractal strings developed in~\cite{L-vF2},
 for example.
As we will see,
 however,
 although there are many analogies between the archimedean and nonarchimedean theories of fractal strings,
 there are also some notable differences between them;
 see, especially, \cite{LapLu2.5}, along with \cite{LapLu1} and \cite{LapLu3}. 
\end{remark}

\subsection{$p$-adic Fractal Strings}

Let  $ \Omega$ be a bounded open subset of $\Q_p$.  
 Then it can be decomposed into a countable union of disjoint open balls with radius $p^{-n_j}$ centered at $a_j\in \Q_p$, 
\[
 a_j + p^{n_j} \Z_p=B(a_j, p^{-n_j})=\{x\in \Q_p ~|~ |x- a_j|_p \le p^{-n_j}\},
\]
  where $n_j \in \Z$ and $j\in \N^{*}$. 
 (We shall often call a $p$-adic ball an \emph{interval}.  By `ball' here, we mean a metrically closed and hence, topologically open and closed ball.) There may be many different such decompositions since each ball can always be decomposed into smaller disjoint balls~\cite{Kob};
 see Equation (\ref{decomposition}).
 However, there is  a canonical decomposition of $\Omega$ into disjoint balls with respect to a suitable equivalence relation, as we now explain.
  
  \begin{definition}\label{relation}
 Let $U$ be an open subset of $\Q_p$. Given $x,y \in U,$ we write that $x\sim y$ if and only if there is a 
 ball $B\subseteq U$ such that $x, y \in B$.
 \end{definition}
 It is clear from the definition that the relation $\sim$ is reflexive and symmetric. To prove the transitivity, let $x \sim y$ and $y\sim z$. Then there are balls $B_1$ containing $x, y$ and $B_2$ containing $y, z$. Thus $y\in B_1 \cap B_2$; so it follows from the ultrametricity of $\Q_p$ that either 
  $B_1 \subseteq B_2$ or $B_2 \subseteq B_1.$   
  In any case, $x$ and $z$ are contained in the same ball; so $x\sim z$. Hence, the above relation
  $\sim$ is indeed an equivalence relation on the open set $U$.
By a standard argument (and since $\Q$ is dense in~$\Q_p$), one shows that there are at most countably many equivalence classes.

\begin{remark}[Convex components] \label{convex component}
The equivalence classes of $\sim$ can be thought of as the `convex components' of $U$.
They are an appropriate substitute in the present nonarchimedean context for the notion of connected components,
 which is not useful in $\Q_p$ since $\Z_p$ (and hence, every interval) is totally disconnected.
Note that given any $x\in U,$ the equivalence class (i.e., the \emph{convex component}) of $x$ is the largest  ball containing $x$ (or equivalently, centered at $x$) and contained in $U$. 
\end{remark}

\begin{definition}\label{p-adic string}
 A \emph{$p$-adic} (or \emph{nonarchimedean}) fractal string
 $ \mathcal{L}_p$ is a bounded open subset $\Omega$ of $\Q_p$. 
 \end{definition}
 Thus it can be written, relative to the above equivalence relation, canonically as a disjoint union of  intervals or balls: 
 \[ \mathcal{L}_p=\bigcup_{j=1}^{\infty} (a_j + p^{n_j} \Z_p)=\bigcup_{j=1}^{\infty} B(a_j, p^{-n_j}).
\]
 Here, $B(a_j, p^{-n_j})$ is the largest ball centered at $a_j$ and contained in $\Omega$.
We may assume that the lengths (i.e., Haar measure) of the intervals 
$a_j + p^{n_j} \Z_p$ are nonincreasing, by reindexing if necessary.  That is, 
\begin{equation}\label{sequence of lengths}
p^{-n_1}\geq p^{-n_2} \geq p^{-n_3} \geq \cdots >0.
\end{equation}

\medskip
Note that, more generally, a $p$-adic fractal string can be defined as an open subset $\Omega$ of $\Q_p$ such that $\mu_H (\Omega) < \infty.$

\begin{definition}\label{zetaLp}
The\emph{ geometric zeta function} of a $p$-adic fractal string $\mathcal{L}_p$ is defined as
\begin{equation} \label{zeta}
\zeta_{\mathcal{L}_p} (s) := \sum_{j=1}^{\infty} (\mu_H (a_j + p^{n_j} \Z_p))^s
= \sum_{j=1}^{\infty} p^{-n_js} 
\end{equation}
for all $s \in \Com$ with $\Re(s)$ sufficiently large.
 \end{definition}

\begin{remark}
The geometric zeta function $\zeta_{\mathcal{L}_p}$ is well defined since the decomposition of 
${\mathcal{L}_p}$ into the disjoint intervals $a_j + p^{n_j}\Z_p$ is unique. Indeed, these intervals are the equivalence classes of which the open set $\Omega$ (defining $\String_p$) is composed. In other words, they are the $p$-adic  ``convex components'' (rather than the connected components) of $\Omega$. Note that in the real (or archimedean) case, there is no difference between the convex or connected components of $\Omega$,
and hence the above construction would lead to the same sequence of lengths as in~\cite[\S1.2]{L-vF2}.  
\end{remark}

\subsection{Example: $p$-adic Euler String}\label{Euler string}

The following $p$-adic Euler string  is a new example of $p$-adic fractal string,
 which is not self-similar (in the sense of~\cite{LapLu2, LapLu3}).
It is a natural $p$-adic counterpart of  the \emph{elementary prime string},  which is the \emph{local} constituent of the \emph{completed  harmonic string}; cf.~\cite[\S4.2.1]{L-vF2}.    

Let $X=p^{-1}\Z_p$. Then, by the `self-duplication' formula~\eqref{decomposition},
\[ 
 X= \bigcup_{\xi=0}^{p-1} (\xi p^{-1} + \Z_p).
  \]
We now keep the first subinterval $\Z_p$, and then decompose the next subinterval further.
That is,
we write
   \[
p^{-1} + \Z_p= \bigcup_{\xi=0}^{p-1} (p^{-1} + \xi  + p\Z_p).
\]
Again,
 iterating this process, we keep the first subinterval $p^{-1} + p\Z_p$ in the above decomposition and decompose the next subinterval,
   $p^{-1} + 1 + p\Z_p$.
Continuing in this fashion, we obtain an infinite sequence of disjoint subintervals 
   $\left\{ a_n + p^n \Z_p\right\}_{n=0}^{\infty},$
where
   $\left\{a_n\right\}_{n=0}^{\infty}$ satisfies the following initial condition and recurrence relation: 
\[
a_0 = 0 \quad  \mbox{and} \quad  a_n=a_{n-1} + p^{n-2} \quad \mbox{for all} ~n\geq 1.
\]
We call the corresponding $p$-adic fractal string,
   \[
   \mathcal{E}_p=\bigcup_{n=0}^{\infty} (a_n + p^n \Z_p),
   \]
   the  \emph{$p$-adic Euler string.} 
      

The geometric zeta function of the $p$-adic Euler string   $\mathcal{E}_p$  is
\begin{equation*}\label{euler factor}
\zeta_{\mathcal{E}_p} (s) = \sum_{n=0}^{\infty} (\mu_H( a_n + p^n \Z_p)) ^s 
= \sum_{n=0}^{\infty} p^{-ns}=
\frac{1}{1-p^{-s}},  \hspace{1cm} \mbox{for}\,\, \Re(s) > 0.
\end{equation*}

Therefore, $\zeta_{\mathcal{E}_p}$ has a meromorphic extension to all of $\Com$ given by the last expression, which is the classic $p$-\emph{Euler factor} (i.e., the local Euler factor associated with the prime $p$):
\begin{equation}\label{ptheuler}
\zeta_{\mathcal{E}_p} (s) =
\frac{1}{1-p^{-s}},  \hspace{1cm} \mbox{for all } \,s\in \Com.
\end{equation}
 Hence, the set of complex dimensions of $\mathcal{E}_p$  is given by
\begin{equation}\label{cdes}
 \mathcal{D}_{\mathcal{E}_p}=\{ D + i\nu\textbf{p}~ |~ \nu \in \Z\}, 
 \end{equation}
  where $D=\sigma=0$ and $\textbf{p}=2\pi /{\log{p}}.$

\begin{remark}[The punctured unit ball]
The unit ball minus the origin
is not a ball itself,
but instead the infinite union
\[
\Z_p\backslash\{0\}=\bigcup_{n=0}^\infty\bigcup_{k=1}^{p-1}kp^{n}+p^{n+1}\Z_p,
\]
where every time,
a small punctured neighborhood of $0$,
namely $p^n\Z_p\backslash\{0\}$,
is subdivided into smaller balls.
This union is isomorphic to the Euler string:
\[
\mathcal{E}_p=\frac1p\Z_p\backslash\left\{\frac1{p(1-p)}\right\}.
\]
\end{remark}

\begin{remark}[Ad\`elic Euler string] 
Note   that  $\zeta_{\mathcal{E}_p}$ is  the $p$-Euler factor of the Riemann zeta function; i.e., 
\begin{equation*}\label{Riemann}
\prod_{p<\infty} \zeta_{\mathcal{E}_p} (s)
=\prod_{p<\infty} \frac{1}{1-p^{-s}}= \sum_{n=1}^{\infty} \frac{1}{n^s} =\zeta(s) \hspace{1cm}  \mbox{for} \,\,\Re(s) > 1.
\end{equation*}
Recall that the meromorphic continuation $\xi$ of the completed Riemann zeta function has the same (critical) zeros as $\zeta$ and satisfies the functional equation $  \xi (s)= \xi (1-s)$.

We aim to form a certain `ad\`elic product'   over all   $p$-adic Euler strings (including the prime at infinity) so that the geometric zeta function of the resulting ad\`elic Euler string $\mathcal E$ is the \emph{completed} Riemann zeta function.  Formally, the ad\`elic Euler string may be written as 
\[\mathcal{E}=\bigotimes_{p\leq \infty} \mathcal{E}_p\] and its geometric zeta function 
$\zeta_{\mathcal E}(s)$ would then coincide with the completed Riemann zeta function $\xi$
(see~\cite{Rie} and, e.g.,~\cite{Edw}):
  \[ \zeta_{\mathcal E}(s)=\xi(s):=\pi^{-s/2}\Gamma(s/2)\prod_{p<\infty} \frac{1}{1-p^{-s}}.\] 
  \end{remark}
  
\begin{remark}[Comparison with the archimedean theory]\label{harmonic string}
  From the geometric point of view, the nonarchimedean Euler string $\mathcal E_p$ is more natural than its archimedean counterpart, the $p$-elementary prime string $h_p$,
described in~\cite[\S4.2.1]{L-vF2}. 
Indeed, as we have just seen, $\mathcal E_p$ has a very simple geometric definition. Since, by construction, $\mathcal E_p$ and $h_p$ have the same sequence of lengths $\left\{p^{-n}\right\}_{n=0}^{\infty},$
they have the same geometric zeta function, namely, the $p$-Euler factor
\begin{equation}
\zeta_{{h}_p(s)}:=\frac{1}{1-p^{-s}}
\end{equation}
of the Riemann zeta function $\zeta(s),$ and hence, the same set of complex dimensions
\begin{equation}
 \mathcal{D}_{{h}_p}=\left\{i\nu \frac{2\pi}{\log p}\colon \nu\in \Z\right\}.
 \end{equation}  
 An `ad\`elic version' of the `harmonic string' $h$, a generalized fractal string whose geometric zeta function is $\zeta_h(s)=\zeta(s)$,  or rather, of its completion $\tilde{h}$ 
 (so that $\zeta_{\tilde h}(s)=\xi(s)$),
is provided in~\cite[\S4.2.1]{L-vF2}.
In particular,
with each term being interpreted as a positive measure on $(0,\infty)$ and the symbol $\ast$ denoting multiplicative convolution on $(0,\infty)$, we have that 
 \begin{equation}
 h=*_{p<\infty}h_p \quad \mbox{and} \quad  \tilde h=\ast_{p\leq\infty}h_p.
\end{equation}
  
 Furthermore, a noncommutative geometric version of this construction is provided in~\cite{Lap2} in terms of the `prime fractal membrane'; see especially,~\cite[Chaps.~3 and 4]{Lap2},
along with~\cite{LapNe}.
 Heuristically, a `fractal membrane' (as introduced in~\cite{Lap2}) is a kind of ad\`elic, noncommutative torus of infinite genus. It can also be thought of as a `quantized fractal string';
see~\cite[Chap.~3]{Lap2}.
It is rigorously constructed in~\cite{LapNe} using Dirac-type operators, Fock spaces, Toeplitz algebras \cite{BS}, and associated spectral triples (in the sense of~\cite{Con});
see also~\cite[\S4.2]{Lap2}.
   We hope in the future to obtain a suitable nonarchimedean version of that construction. It is possible that in the process, we will establish contact with the physically motivated work in~\cite{Drag} involving 
   $p$-adic quantum mechanics. 
 \end{remark}

\section{The Geometric Zeta Function}\label{GZF}

\begin{figure}[h]
\begin{picture}(324,210)(-180,-105)
\unitlength .85pt
\put(-170,0){\line(1,0){290}}
\put(0,-130){\line(0,1){260}}
\put(70,-2){\line(0,1){4}}\put(70,-8){\makebox(0,0){\small$D$}}
\put(-5,-8){\makebox(0,0){\small$0$}}
\put(100,-2){\line(0,1){4}}\put(100,-8){\makebox(0,0){\small$1$}}
\put(20,70){\small$W$}
\put(-90,58){\small$S$}
\bezier{300}(-90,0)(-90,20)(-80,40)
\bezier{300}(-60,117)(-60,80)(-80,40)
\bezier{300}(-90,0)(-90,-20)(-80,-40)
\bezier{300}(-60,-117)(-60,-80)(-80,-40)
\end{picture}
\caption{The screen $S$ and the window $W$.}
\label{screen window}
\end{figure}
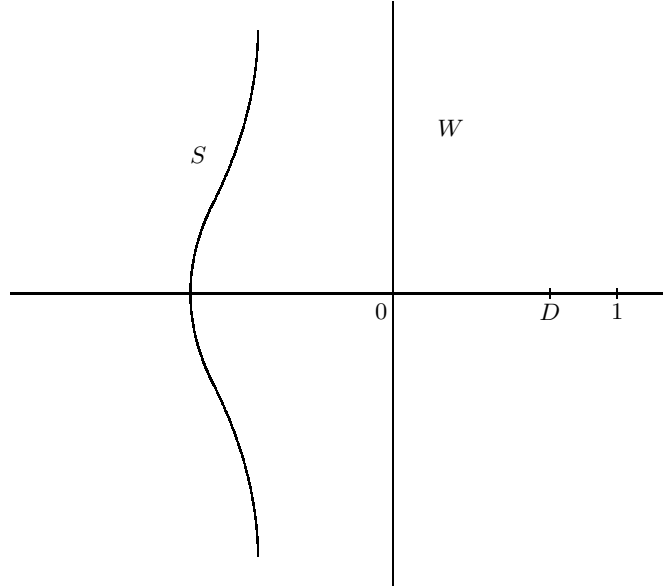
  
The \emph{screen} $S$ is the graph (with the vertical and horizontal axes interchanged)
of a real-valued, bounded and Lipschitz continuous function $S(t)$:
\[
S:=\{S(t) + it ~|~ t\in \R\}.
\]
The \emph{window} $W$ is the part of the complex plane to the right of the screen $S$
(see Figure 1): 
\[
W:=\{s\in \Com ~|~ \Re(s)\geq S(\Im (s))\}.
\]
Let 
\[
\inf S:=\inf_{t\in \R} S(t) \quad \mbox{and} \quad \sup S:=\sup_{t\in \R}S(t), 
\]
and assume that $\sup S \leq \sigma,$ where $\sigma=\sigma_{\String_p}$ is the abscissa of convergence of $\zeta_{\String_p}$ (to be precisely defined in (\ref{sigma}) below).

  \begin{definition}\label{dvcd} 
 {\em Let $\mcl_p$ be a} $p$-adic {\em fractal string.} If $\zeta_{\mathcal{L}_p}$ has a meromorphic continuation to an open connected neighborhood of $W\subseteq \Com$, then 
\begin{equation}\label{vcd}
 \mathcal D_{\String_p}(W):=\{\omega \in W ~|~ \omega \mbox{ is a pole of} ~ \zeta_{\mathcal{L}_p}\} 
 \end{equation}
is called the set of \emph{visible complex dimensions} of $\String_p$.  
If no ambiguity may arise or  if $W=\Com$,  we simply write $\mathcal D_{\String_p}=\mathcal{D}_{\mathcal{L}_p}(W)$ and call it the set of \emph{complex dimensions} of $\String_p$. 

Moreover, the \emph{abscissa of convergence} of $\zeta_{\String_p}$ (where $\mcl_p$ is defined in Equation~\eqref{zeta}) is denoted by
$\sigma=\sigma_{\String_p}$.
Recall that it is defined by (see, e.g.,~\cite{Ser}) 
\begin{equation}\label{sigma}
\sigma_{\String_p}:=\inf\Bigg\{\alpha \in \R \colon \sum_{j=1}^{\infty}p^{-n_j\alpha} <\infty\Bigg\}.
\end{equation}
\end{definition}

\begin{remark}
In particular, if $\zeta_{\String_p}$ is entire (which occurs only in the trivial case when 
$\String_p$ is given by a finite union of intervals), then $\sigma_{\String_p}=-\infty.$
Otherwise,
 $\sigma_{\String_p}\geq 0$ (since $\String_p$ is composed of infinitely many intervals) and we will see in Theorem~\ref{T:p-adic D_M=sigma} that $\sigma_{\String_p}<\infty$ since $\sigma_{\String_p}\leq D_M\leq 1,$
 where $D_M=D_{M,\String_p}$ is the Minkowski dimension of $\String_p$,
to be introduced in \S\ref{Mdimension}.
Furthermore,
it will follow from Theorem~\ref{T:p-adic D_M=sigma} that for a nontrivial $p$-adic fractal  string, 
$\sigma_{\String_p}=D_M.$ 

Observe that since $ \mathcal D_{\String_p}(W)$ is defined as a subset of the poles of a meromorphic function, it is at most countable and forms a discrete subset of $\Com$. 

Finally, we note that it is well known that $\zeta_{\String_p}$ is holomorphic for $\Re(s)>\sigma_{\String_p};$ see, e.g.,~\cite{Ser}.
Hence, 
\[
\mathcal D_{\String_p} \subseteq \{s\in \Com\colon \Re(s)\leq \sigma_{\String_p}\}.
\]
\end{remark}

\begin{remark}[Archimedean fractal strings]\label{afs}
\emph{Archimedean} or \emph{real} fractal strings are defined as bounded open subsets of the real line $\R=\Q_{\infty}.$ They were initially defined in~\cite{LapPo},
following an early example in~\cite{Lap1},
and have been used extensively in a variety of settings; see, e.g.,~\cite{Elmr,Fal2,HeLap,HerLap,HerLap2,HerLap3,Kom,Lap1,Lap,Lap3,Lap4,Lap5,Llr,LapMa,LapPe2, LapPe1,LPW,LapPo3,LapRaZu1,LapRaZu1.5,LapRaZu2,LapRaZu3,LapRaZu3.5,LapRaZu4, LapRo1,MorSepVi,Pe} and the books~\cite{L-vF1,LapRaZu1,L-vF2,Lap2}.
Since an open set $\Omega\subset \R$ is canonically equal to the disjoint union of finitely or countably many open and bounded intervals (namely, its connected components), say $\Omega=\bigcup_{j=1}^{\infty} I_j,$ we may also describe a real fractal string by a sequence of lengths $\String=\left\{l_j\right\}_{j=1}^{\infty},$ where $l_j=\mu_L(I_j)$ is the length or 1-dimensional Lebesgue measure of the interval $I_j$, written in nonincreasing order:
\[
l_1\geq l_2\geq l_3 \geq \cdots.
\]
(A justification for this identification is provided by the formula for the volume $V_{\String}(\eps)$ of $\eps$-inner tubes of $\Omega$, as given by Equation~\eqref{LPformula} below.) 
Note that since $\mu_L(\Omega) < \infty,  l_j \rightarrow 0$ as $j\rightarrow \infty$, except in the trivial case when $\Omega$ consists of finitely many intervals. 
Also observe that the 1-dimensional Lebesgue measure $\mu_L$ is nothing but the Haar measure on $\R=\Q_{\infty}$, normalized so that $\mu_L([0,1])=1.$ 

All of the definitions given above for $p$-adic fractal strings have a natural counterpart for real fractal strings.
For instance, the geometric zeta function of $\String$ is initially defined by 
\begin{equation}\label{zetaL}
\zeta_{\String}(s)=\sum_{j=1}^{\infty}(\mu_L(I_j))^s=\sum_{j=1}^{\infty} l_j^s,
\end{equation}
for $\Re(s) > \sigma_{\String},$
 the abscissa of convergence of $\zeta_\String$,
 and for a given screen $S$  and associated window $W$,
 the set 
$\mathcal D_{\String}=\mathcal D_{\String}(W)$ of visible complex dimensions of~$\String$ is given exactly as in (\ref{vcd}) of Definition \ref{dvcd},
 except with $\String_p$ and $\zeta_{\String_p}$
replaced with~$\String$ and $\zeta_{\String}$,
respectively.
Similarly, $\sigma_{\String}$, the abscissa of convergence of $\zeta_\String$ is given as in~\eqref{sigma},
 except with the lengths of $\String$ instead of those of~$\String_p$.
Moreover,
it follows from~\cite[Thm.~1.10]{L-vF2} that for any nontrivial real fractal string $\String$, we have $\sigma_{\String}=D_M$, the Minkowski dimension of $\String$ (i.e., of its topological boundary $\partial \Omega$). This latter result will be given a new proof in \S 5.1.

We refer the interested reader to the research monographs~\cite{L-vF1,L-vF2} for a full development of the theory of real fractal strings and their complex dimensions.
\end{remark}

\subsection{Languid and Strongly Languid $p$-adic Fractal Strings}

In \S\ref{explicit tf}, we will obtain  explicit tube formulas for $p$-adic fractal strings, with and without error term.
(See Theorem \ref{dtf} and Corollary \ref{ftf}.)
We will then apply the  tube formula without error term (the strongly languid case of Theorem \ref{dtf}) to the $p$-adic Euler string discussed in \S\ref{Euler string} and revisited in Example~\ref{eulerrevisit} (at the end of \S\ref{explicit tf}). 

In order to state the explicit formulas with (or without) error term, we need to assume the following technical hypotheses (see~\cite[Defns.~5.2 and~5.3]{L-vF2} and recall the definition of the screen $S$ given in \S\ref{pfs},
just before Definition \ref{dvcd}).

\begin{definition}\label{languid}
A $p$-adic  fractal string $\mathcal{L}_p$ is said to be \emph{languid} if its geometric zeta function 
$\zeta_{\mathcal{L}_p}$ satisfies the following growth conditions: There exist real constants $\kappa$ and $C>0$ and a two-sided sequence $\{T_n\}_{n\in \Z}$ of real numbers such that $T_{-n}<0<T_n$ for $n\geq1$, and 
\[
\lim_{n \rightarrow \infty} T_n=\infty, \quad  \lim_{n \rightarrow \infty} T_{-n}=-\infty, \quad
 \lim_{n \rightarrow \infty} \frac{T_n}{|T_{-n}|}=1,
 \]
 such that
 \begin{itemize}
 \item $\mathbf{L1}$ For all $n\in \Z$ and all $u \geq S(T_n),$
\[| \zeta_{\mathcal{L}_p}(u +i T_n)|\leq C (|T_n|+1)^{\kappa},\]
  \item $\mathbf{L2}$
 For all $t\in \R, |t|\geq 1,$
\[ | \zeta_{\mathcal{L}_p}(S(t)+it)|\leq C |t|^{\kappa}.\]
 \end{itemize}

We say that $\mathcal{L}_p$ is \emph{strongly languid} if its geometric zeta function $\zeta_{\mathcal{L}_p}$ satisfies the following conditions, in addition to $\mathbf{L1}$ with $S(t)\equiv -\infty:$
 There exists a sequence of screens $S_m: t\mapsto S_m(t)$ for $m\geq1, t\in \R$, with 
 $\sup S_m \rightarrow -\infty$ as $m\rightarrow \infty$ and with a uniform Lipschitz bound $\sup_{m\geq1} ||S_m||_{Lip}< \infty$, such that
 \begin{itemize}
 \item $\mathbf{L2'}$ There exist constants $A,C>0$ such that for all $t\in \R$ and $m\geq1$, 
 \[ |\zeta_{\mathcal{L}_p}(S_m(t)+it)|\leq C A^{|S_m(t)|}(|t| + 1)^{\kappa}.\]
 \end{itemize}
 \end{definition}
 
\begin{remark} (a) Intuitively, hypothesis $\mathbf{L1}$ is a polynomial growth condition along horizontal lines (necessarily avoiding the poles of $\zeta_{\mathcal{L}_p}$), while hypothesis $\mathbf{L2}$ is a polynomial growth condition along the vertical direction of the screen.

(b) Clearly, condition $\mathbf{L2'}$ is stronger than $\mathbf{L2}$. Indeed, if $\mathcal{L}_p$ is strongly languid, then it is also languid (for each screen $S_m$ separately). 

(c) Moreover, if $\mathcal{L}_p$ is languid for some $\kappa$, then it is also languid for every larger value of $\kappa$.  The same is also true for strongly languid strings. 

(d) Finally, hypotheses $\mathbf{L1}$ and $\mathbf{L2}$ require that  $\zeta_{\mathcal{L}_p}$ has an analytic (i.e., meromorphic) continuation to an open,
 connected neighborhood of $\Re(s)\geq \sigma_{\String_p}$, while $\mathbf{L2'}$ requires that $\zeta_{\String_p}$ has a meromorphic continuation to all of $\Com$. 
\end{remark}

\section{Volume of Thin Inner Tubes}\label{inner tube}

In this section, we provide a suitable analog in the $p$-adic case of the `boundary' of a fractal string and of the associated inner tubes 
 (or  ``inner $\eps$-neighborhoods''). Moreover, we give the $p$-adic counterpart of the expression that yields the volume of the inner tubes (see Theorem \ref{thin}). This result will serve as a starting point in \S6 for proving the distributional explicit tube formula obtained in Theorem \ref{dtf}. 

\begin{definition}\label{volume definition}
Given a point $a\in \Q_p$ and a positive real number $r>0$,  
let $B=B(a, r) =\{x\in \Q_p ~|~ |x-a|_p \le r \}$ be a \emph{metrically closed} ball in $\Q_p,$ as above. (Recall that it follows from the ultrametricity of $|\cdot|_p$ that $B$ is topologically both closed and open (i.e., clopen) in $\Q_p$.)
 We  call 
$S=S(a, r)=\{x\in \Q_p ~|~ |x-a|_p = r \}$ the \emph{sphere} of $B$. (In our sense, $S$ also coincides with the `metric boundary' of $B$, as given in the next definition.)

Let $\mathcal{L}_p= \bigcup_{j=1}^{\infty} B(a_j, r_j)$ be a $p$-adic fractal string. We then define the \emph{metric boundary} $\beta\mathcal{L}_p$ of $\mathcal{L}_p$ to be the disjoint union of the corresponding spheres, i.e.,  
\[
\beta\mathcal{L}_p = \bigcup_{j=1}^{\infty} S(a_j, r_j).
\]

Given a real number $\eps>0$,  define the  \emph{thick $p$-adic `inner 
$\eps$-neighborhood'} (or `\emph{inner tube}') of $\mathcal{L}_p$ to be
\begin{equation}\label{thick inner tube}
\mathcal{N}_{\eps}=\mathcal{N}_{\eps}(\String_p):=\{ x\in \mathcal{L}_p ~|~  d_p(x, \beta\mathcal{L}_p) < \eps\},
\end{equation}
where $d_p(x, E)=\inf \{ |x-y|_p ~|~  y\in E\}$
is the $p$-adic distance of $x \in \Q_p$ to a subset $E \subseteq \Q_p$.
Then the \emph{volume $\mathcal{V}_{\String_p}(\eps)$ of the thick inner 
$\eps$-neighborhood} of $\mathcal{L}_p$  is defined to be the Haar measure of $\mathcal{N}_{\eps}$, i.e., 
$\mathcal{V}_{\String_p}(\eps)=\mu_H(\mathcal{N}_{\eps}).$
\end{definition}
 
\begin{lemma}\label{ball and sphere}
Let $B=B(a,r)$ and $S=S(a,r)$, as in Definition \ref{volume definition}. Then, for any positive number $\eps <r$, we have 
\begin{equation}\label{sphere}
\mathcal{N}_{\eps}(B):=\{x\in B ~|~ d_p(x, S)<\eps\}=S. 
\end{equation}
Further,
 if $r=p^{-m}$ for some $m\in \Z,$ then for all $\eps <r,$
\begin{equation}\label{spherevolume}
\mu_H(\{x\in B ~|~ d_p(x, S)<\eps\})=\mu_H(S)=
(1-p^{-1})p^{-m}.
\end{equation}
\end{lemma}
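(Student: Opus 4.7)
The plan is to exploit the ultrametric structure of $\Q_p$, specifically the isosceles triangle property \eqref{E:isosceles}, to show that the distance function $x\mapsto d_p(x,S)$ on $B$ is two-valued: it equals $0$ on $S$ and equals $r$ on $B\setminus S$. Once this dichotomy is established, the identity \eqref{sphere} is immediate for every $\eps<r$, and the Haar measure computation \eqref{spherevolume} reduces to subtracting the measures of two nested balls.

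For the first assertion, I would fix an arbitrary $x\in B\setminus S$ and an arbitrary $y\in S$. Then $|x-a|_p<r$ while $|y-a|_p=r$, so \eqref{E:isosceles} applied to the triple $(x,a,y)$ (with $d_p(x,a)<d_p(y,a)$ as the unequal sides) forces $d_p(x,y)=d_p(y,a)=r$. Taking the infimum over $y\in S$, which is nonempty since $r\in p^{\Z}$ puts plenty of points at exact distance $r$ from $a$, yields $d_p(x,S)=r$. Combined with the trivial $d_p(x,S)=0$ for $x\in S$, the distance function is two-valued as claimed, so for every $\eps<r$ the set $\{x\in B\colon d_p(x,S)<\eps\}$ coincides exactly with $S$, proving \eqref{sphere}.

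For the measure statement, I assume $r=p^{-m}$. Because the valuation group is the discrete set $p^{\Z}$, the strict inequality $|x-a|_p<p^{-m}$ is equivalent to $|x-a|_p\le p^{-m-1}$. Hence $B\setminus S=B(a,p^{-m-1})=a+p^{m+1}\Z_p$, and $B$ decomposes as the disjoint union $B=S\sqcup(a+p^{m+1}\Z_p)$. Using the normalization $\mu_H(a+p^n\Z_p)=p^{-n}$ recalled in \S\ref{S:p-adic numbers}, additivity of the Haar measure gives
\[
\mu_H(S)=\mu_H(B)-\mu_H(a+p^{m+1}\Z_p)=p^{-m}-p^{-m-1}=(1-p^{-1})p^{-m}.
\]
Combined with $\mathcal{N}_{\eps}(B)=S$ from the first step, this yields \eqref{spherevolume}.

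There is no genuine obstacle here; the real content of the lemma is the \emph{nonarchimedean} phenomenon that an ``interior'' point of $B$ is maximally far from the boundary sphere — a fact that fails dramatically in the archimedean setting and is what ultimately makes $p$-adic tube volumes so much more rigid than their real counterparts. The only mild care needed is in passing from the strict inequality defining $\mathcal{N}_{\eps}$ to the next allowed radius $p^{-m-1}$, which is why the hypothesis $r\in p^{\Z}$ is invoked specifically for the measure-theoretic conclusion.
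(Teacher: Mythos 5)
Your proof is correct. The first assertion is handled essentially as in the paper: both arguments hinge on the isosceles triangle property \eqref{E:isosceles} applied to a point $x$ strictly inside $B$ and a point $y$ on $S$ (the paper argues contrapositively from $d_p(x,S)<\eps$ to $x\in S$, while you compute the distance function outright and observe it only takes the values $0$ and $r$; these are the same idea). For the measure computation, however, you take a genuinely different and more direct route. The paper first reduces to the unit sphere $S^1=S(0,1)$ by translation and scaling, then decomposes the unit ball as the disjoint union $\bigcup_{m\geq 0}S(0,p^{-m})$ and sums the resulting geometric series to extract $\mu_H(S^1)=1-p^{-1}$. You instead use the discreteness of the value group to identify $B\setminus S$ with the single smaller ball $a+p^{m+1}\Z_p$ and subtract measures, $\mu_H(S)=p^{-m}-p^{-m-1}$. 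Your computation is shorter and more elementary; the paper's version has the side benefit of exhibiting the sphere decomposition of $\Z_p$, which reappears in the discussion of the punctured unit ball and the Euler string. One cosmetic remark: in the first part of the lemma $r$ is only assumed to be a positive real, so your appeal to $r\in p^{\Z}$ to guarantee $S\neq\emptyset$ is not literally available there; but if $S=\emptyset$ then both sides of \eqref{sphere} are empty and the claim is vacuous, so nothing is lost.
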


\begin{proof}
 Clearly $S \subseteq\{x\in B ~|~ d_p(x, S)<\eps\}$   since for any $x\in S$, 
$d_p(x,S)=0$. Next, fix $\eps$ with $0<\eps<r$ and let $x\in B$ be such that $d_p(x, S)< \eps.$ Then there must exist $y\in S$ such that $|x-y|_p<\eps.$
But, since $|y-a|_p=r,$ we deduce from the fact that every ``triangle'' in $\Q_p$ is isosceles~\cite[p.~6]{Kob} that 
$|x-a|_p=|y-a|_p$ and thus $x\in S$. This completes the proof of~\eqref{sphere}.

We next establish formula~\eqref{spherevolume}.
In light of Equation~\eqref{sphere},
 it suffices to show that 
\begin{equation}\label{Svolume}
\mu_H(S)=(1-p^{-1})p^{-m}.
\end{equation}
Let $S^1=S(0,1)=\{x\in \Q_p~|~ |x|_p=1\}$ denote the unit sphere in $\Q_p$. Since 
$S=S(a, p^{-m})=a+p^mS^1,$ we have that $\mu_H(S)=\mu_H(S^1)p^{-m}.$
Next we note that 
\[
B(0,1)= \bigcup_{m\geq 0}S(0,p^{-m})
\]
is a disjoint union. Hence,  by taking the Haar measure of $B(0,1),$ we deduce that 
\begin{equation}
  1 = \left(\sum _{m=0} ^ {\infty} p^{-m}\right) \mu_{H}(S^1) =  \frac{1}{1-p^{-1}} \mu_{H}(S^1),
  \end{equation}
from which~\eqref{Svolume} and hence,
 in light of the first part,
\eqref{spherevolume} follows. 
\end{proof}

\begin{theorem}[Volume of thick inner tubes]\label{thick}
Let $ \mathcal{L}_p=\bigcup_{j=1}^{\infty}  B(a_j, p^{-n_j})$ be a $p$-adic fractal string.
Then,
for any $\eps >0,$ we have
\begin{align}
\mathcal{V}_{\String_p}(\eps)
&=(1-p^{-1})\sum_{j=1}^k p^{-n_j} +\sum_{j>k} p^{-n_j}\label{equation 1}\\
&= \zeta_{\String_p}(1)-\frac{1}{p} \sum_{j=1}^k   p^{-n_j},\label{equation 2}
\end{align} 
where $k=k(\eps)$ is the largest integer such that $p^{-n_k}\ge \eps$.
\end{theorem}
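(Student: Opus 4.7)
The plan is to decompose the thick inner neighborhood ball by ball. Because the balls $B(a_j, p^{-n_j})$ are pairwise disjoint and $\Q_p$ is ultrametric, distinct balls are well-separated: if $i\neq j$, then $|a_i-a_j|_p>\max\{p^{-n_i},p^{-n_j}\}$ (otherwise one ball would be nested inside the other). For any $x\in B(a_j,p^{-n_j})$ and any $y\in B(a_i,p^{-n_i})$ with $i\neq j$, the isosceles property \eqref{E:isosceles} then forces $|x-y|_p=|a_i-a_j|_p>p^{-n_j}$. Since $d_p(x,S(a_j,p^{-n_j}))\leq p^{-n_j}$, this shows
\[
d_p(x,\beta\mathcal{L}_p)=d_p(x,S(a_j,p^{-n_j}))\quad\text{for every }x\in B(a_j,p^{-n_j}).
\]
Thus the contribution of $\mathcal{N}_\eps$ from each ball depends only on $\eps$ and $p^{-n_j}$.

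Next, I would apply Lemma~\ref{ball and sphere} to each ball separately. When $p^{-n_j}\geq \eps$, the intersection $\mathcal{N}_\eps\cap B(a_j,p^{-n_j})$ equals the sphere $S(a_j,p^{-n_j})$ (the strict-inequality case $p^{-n_j}>\eps$ is the lemma directly, and the borderline case $p^{-n_j}=\eps$ follows by the same argument since any $x$ off the sphere satisfies $d_p(x,S(a_j,p^{-n_j}))=p^{-n_j}\not<\eps$), giving Haar measure $(1-p^{-1})p^{-n_j}$. When $p^{-n_j}<\eps$, every point of the ball is within $\eps$ of the sphere, so the whole ball contributes, with Haar measure $p^{-n_j}$.

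Because the sequence $\{p^{-n_j}\}$ is nonincreasing, the two regimes split cleanly at the index $k=k(\eps)$ defined in the statement. Summing the ball-by-ball contributions (the disjointness of the balls makes this a genuine additivity argument for $\mu_H$) yields \eqref{equation 1}. For \eqref{equation 2}, I would simply distribute and rearrange:
\[
(1-p^{-1})\sum_{j=1}^k p^{-n_j}+\sum_{j>k}p^{-n_j}
=\sum_{j=1}^\infty p^{-n_j}-\frac{1}{p}\sum_{j=1}^k p^{-n_j}
=\zeta_{\String_p}(1)-\frac{1}{p}\sum_{j=1}^k p^{-n_j},
\]
where the series $\sum_{j=1}^\infty p^{-n_j}=\zeta_{\String_p}(1)$ converges because $\mu_H(\mathcal{L}_p)<\infty$ (so in particular $\sigma_{\String_p}\leq 1$).

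The main conceptual step is the reduction $d_p(x,\beta\mathcal{L}_p)=d_p(x,S(a_j,p^{-n_j}))$ on each ball; this is precisely where the nonarchimedean geometry does the essential work. In the archimedean setting, a point near the end of one interval can have its distance to the full boundary dictated by a neighboring interval, forcing the classical Lapidus--Pomerance formula for $V_{\String}(\eps)$ to depend on whether $2\eps$ exceeds the length of the interval. In $\Q_p$, by contrast, each ball's sphere is \emph{strictly closer} to any interior point than any other ball's sphere, so the contributions decouple perfectly and no factor of $2$ appears. Once this decoupling is established, the rest is purely bookkeeping with Lemma~\ref{ball and sphere}.
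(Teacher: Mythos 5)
Your proof is correct and follows essentially the same route as the paper: decompose $\mathcal{N}_\eps$ ball by ball into the spheres $S_j$ for $j\le k$ and the full balls $B_j$ for $j>k$, then apply Lemma~\ref{ball and sphere} and sum. In fact you supply two details the paper leaves implicit --- the ultrametric decoupling $d_p(x,\beta\mathcal{L}_p)=d_p(x,S(a_j,p^{-n_j}))$ on each ball, and the borderline case $p^{-n_j}=\eps$ not literally covered by the lemma's hypothesis $\eps<r$ --- both handled correctly.
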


\begin{proof}
In light of the definition of $\mathcal N_{\eps}=\mathcal N_{\eps}(\String_p)$ given in Equation (\ref{thick inner tube}) and the definition of $k$ given in the theorem, we have that 
\[
\mathcal N_{\eps}=\bigcup_{j=1}^kS_j\cup \bigcup_{j>k}B_j,
\]
where $B_j:=B(a_j,p^{-n_j})$ and $S_j:=S(a_j,p^{-n_j})$ for each $j\geq 1$.

We then apply Lemma \ref{ball and sphere} to deduce the expression of 
$\mathcal V_{\String_p}(\eps)=\mu_H(\mathcal N_{\eps})$
stated in Equations~\eqref{equation 1} and~\eqref{equation 2}.
\end{proof}

Note that
$\zeta_{\String_p}(1)=\sum_{j=1}^{\infty}p^{-n_j}$ 
is the volume of $\String_p$ (or rather, of the bounded open subset $\Omega$ of $\Q_p$ representing $\String_p$):
\[
\zeta_{\String_p}(1)=\mu_H(\String_p)<\infty.
\]
It is clearly independent of the choice of $\Omega$ representing $\String_p$, and so is $\mathcal V_{\String_p}(\eps)$ in light of either~\eqref{equation 1} or~\eqref{equation 2}.

 \begin{corollary}\label{limit}
 The following limit exists in $(0,\infty)\colon$
 \begin{equation}
 \lim_{\eps \to 0^+} \mathcal{V}_{\String_p}(\eps)
 =\mu_H(\beta\String_p)
 =(1-p^{-1})\zeta_{\String_p}(1).
   \end{equation} 
 \end{corollary}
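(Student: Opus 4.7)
The plan is to take the limit $\eps\to0^+$ in the two expressions for $\mathcal{V}_{\String_p}(\eps)$ provided by Theorem~\ref{thick}, and then separately compute $\mu_H(\beta\String_p)$ by using Lemma~\ref{ball and sphere} term by term on the disjoint union defining the metric boundary.

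First, I would observe that since $\zeta_{\String_p}(1)=\sum_{j=1}^\infty p^{-n_j}<\infty$ (this is just the Haar measure of the bounded open set $\Omega$), the nonincreasing sequence $\{p^{-n_j}\}_{j=1}^\infty$ in~\eqref{sequence of lengths} must tend to~$0$. Consequently, for the cutoff $k=k(\eps)$ defined in Theorem~\ref{thick} as the largest index with $p^{-n_k}\geq\eps$, we have $k(\eps)\to\infty$ as $\eps\to0^+$. Next, I would apply this to the formula
\[
\mathcal{V}_{\String_p}(\eps)=\zeta_{\String_p}(1)-\frac1p\sum_{j=1}^{k(\eps)}p^{-n_j}
\]
from~\eqref{equation 2}: since the partial sums $\sum_{j=1}^{k(\eps)}p^{-n_j}$ converge monotonically to $\zeta_{\String_p}(1)$ as $\eps\to0^+$, we conclude that
\[
\lim_{\eps\to0^+}\mathcal{V}_{\String_p}(\eps)=\zeta_{\String_p}(1)-\frac1p\zeta_{\String_p}(1)=(1-p^{-1})\zeta_{\String_p}(1).
\]
(Alternatively, using~\eqref{equation 1}, the first sum tends to $(1-p^{-1})\zeta_{\String_p}(1)$ and the tail $\sum_{j>k(\eps)}p^{-n_j}\to 0$, giving the same conclusion.)

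It remains to identify this limit with $\mu_H(\beta\String_p)$. By Definition~\ref{volume definition}, the metric boundary is the \emph{disjoint} union $\beta\String_p=\bigcup_{j=1}^\infty S(a_j,p^{-n_j})$ (the disjointness follows because the balls $B(a_j,p^{-n_j})$ are already disjoint and each sphere lies inside the corresponding ball). By countable additivity of $\mu_H$ together with Lemma~\ref{ball and sphere} (specifically~\eqref{spherevolume}, which gives $\mu_H(S(a_j,p^{-n_j}))=(1-p^{-1})p^{-n_j}$), we obtain
\[
\mu_H(\beta\String_p)=\sum_{j=1}^\infty(1-p^{-1})p^{-n_j}=(1-p^{-1})\zeta_{\String_p}(1),
\]
matching the limit above. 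Since $\zeta_{\String_p}(1)\in(0,\infty)$ and $1-p^{-1}>0$, the limit is a positive finite number, completing the proof.

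There is essentially no obstacle here: the corollary is a direct consequence of Theorem~\ref{thick} and the volume formula for a $p$-adic sphere from Lemma~\ref{ball and sphere}. The only point worth flagging is the (trivial) verification that $p^{-n_j}\to0$, which relies on the finiteness of $\zeta_{\String_p}(1)$, i.e., on the boundedness of $\Omega$.
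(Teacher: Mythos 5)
Your proof is correct and follows essentially the same route as the paper, which simply lets $\eps\to0^+$ in Equations~\eqref{equation 1}--\eqref{equation 2} of Theorem~\ref{thick} and notes that $k(\eps)\to\infty$. The only difference is that you spell out the identification of the limit with $\mu_H(\beta\String_p)$ via countable additivity and Lemma~\ref{ball and sphere}, a step the paper leaves implicit.
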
 
 
 \begin{proof}
 This follows by letting $\eps \rightarrow 0^+$ in either~\eqref{equation 1} and~\eqref{equation 2} and noting that $k=k(\eps)\rightarrow \infty.$
 \end{proof}
 
Corollary \ref{limit}, combined with the fact that $\beta\String_p \subset \mathcal N_{\eps}(\String_p)$ for any $\eps >0,$ naturally  leads us to introduce the following definition. 

\begin{definition}\label{thin inner tube}
Given $\eps >0,$ the \emph{thin $p$-adic `inner $\eps$-neighborhood'} (or \emph{ `inner tube'}) of 
$\String_p$ is given by 
\begin{equation}\label{equation N}
N_{\eps}=N_{\eps}(\String_p):=\mathcal N_{\eps}(\String_p)\backslash \beta\String_p.
\end{equation}

Then,
in light of Corollary~\ref{limit},
the \emph{volume $V_{\String_p}(\eps)$ of the thin inner  
$\eps$-neigh\-bor\-hood} of $\String_p$ is defined to be the Haar measure of $N_{\eps}$ and is given by
\begin{equation}\label{volume of the thin inner tube}
V_{\String_p}(\eps):=\mu_H(N_{\eps})
=\mathcal V_{\String_p}(\eps)-\mu_H(\beta \String_p).
\end{equation}
Note that, by construction, we now have $\lim_{\eps \to 0^+}V_{\String_p}(\eps)=0.$
\end{definition}

We next state the counterpart (for thin inner tubes) of Theorem \ref{thick}, which is the key result that will enable us to obtain an appropriate $p$-adic analog of the fractal tube formula (in \S6) as well as of the notions of Minkowski dimension and content (in \S\ref{Mdimension}).

\begin{theorem}[Volume of thin inner tubes]\label{thin}
Let $ \mathcal{L}_p=\bigcup_{j=1}^{\infty}  B(a_j, p^{-n_j})$ be a $p$-adic fractal string. Then, for any $\eps >0,$ we have
\begin{align}\label{volumeequation}
V_{\String_p}(\eps)
&=p^{-1}\sum_{j>k} p^{-n_j} =p^{-1}\sum_{j:p^{-n_j}<\eps} p^{-n_j}\\
&= p^{-1}\biggl(\zeta_{\String_p}(1)-\sum_{j=1}^k   p^{-n_j}\biggr),\label{equation2}
\end{align} 
  where $k=k(\eps)$ is the largest   integer such that
 $p^{-n_k}\ge \eps$, as before.
\end{theorem}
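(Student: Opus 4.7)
The plan is to derive this statement as a direct consequence of Theorem~\ref{thick}, Corollary~\ref{limit}, and the very definition of the thin inner tube given in Definition~\ref{thin inner tube}. Since $N_\eps = \mathcal N_\eps \setminus \beta\String_p$ and the metric boundary is contained in the thick inner tube (the spheres $S_j$ for $j\leq k$ lie in $\mathcal N_\eps$, while for $j>k$ the whole ball $B_j$, hence $S_j$, lies in $\mathcal N_\eps$), the sets $\beta\String_p$ and $N_\eps$ are disjoint with union $\mathcal N_\eps$. Consequently
\[
V_{\String_p}(\eps) = \mu_H(N_\eps) = \mathcal V_{\String_p}(\eps) - \mu_H(\beta\String_p).
\]

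The main step is then a bookkeeping computation. I would substitute the expression for $\mathcal V_{\String_p}(\eps)$ from~\eqref{equation 1} of Theorem~\ref{thick} and the expression for $\mu_H(\beta\String_p) = (1-p^{-1})\zeta_{\String_p}(1)$ from Corollary~\ref{limit}. Splitting $\zeta_{\String_p}(1) = \sum_{j=1}^\infty p^{-n_j}$ at the index $k = k(\eps)$, this gives
\[
V_{\String_p}(\eps) = \Bigl[(1-p^{-1})\sum_{j=1}^k p^{-n_j} + \sum_{j>k} p^{-n_j}\Bigr] - (1-p^{-1})\Bigl[\sum_{j=1}^k p^{-n_j} + \sum_{j>k} p^{-n_j}\Bigr].
\]
The partial sums up to $k$ cancel, leaving
\[
V_{\String_p}(\eps) = \sum_{j>k}p^{-n_j} - (1-p^{-1})\sum_{j>k}p^{-n_j} = p^{-1}\sum_{j>k}p^{-n_j},
\]
which is the first equality in~\eqref{volumeequation}.

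For the second equality, I would invoke the monotonicity condition~\eqref{sequence of lengths}: since $p^{-n_1}\geq p^{-n_2}\geq\cdots$ and $k$ is defined as the largest index with $p^{-n_k}\geq \eps$, the condition $j>k$ is equivalent to $p^{-n_j}<\eps$, yielding the alternative form of the sum. Finally, to obtain~\eqref{equation2}, I would simply rewrite $\sum_{j>k}p^{-n_j} = \zeta_{\String_p}(1) - \sum_{j=1}^k p^{-n_j}$, which is valid because $\zeta_{\String_p}(1) = \mu_H(\String_p) < \infty$. There is no genuine obstacle here: the lemma is essentially a restatement of Theorem~\ref{thick} after subtracting off the Haar measure of the metric boundary, and the only small subtlety is the identification of the index condition $j>k$ with the geometric condition $p^{-n_j}<\eps$, which follows immediately from the ordering convention on the lengths.
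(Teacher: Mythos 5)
Your proposal is correct and follows essentially the same route as the paper: the authors likewise deduce Theorem~\ref{thin} directly from Theorem~\ref{thick}, Corollary~\ref{limit} and Equation~\eqref{volume of the thin inner tube} of Definition~\ref{thin inner tube}, merely omitting the explicit cancellation that you spell out. The bookkeeping and the identification of $j>k$ with $p^{-n_j}<\eps$ via the ordering~\eqref{sequence of lengths} are both accurate.
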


\begin{proof}
In view of Theorem \ref{thick} and Corollary \ref{limit}, the result follows immediately from Equation \eqref{volume of the thin inner tube} in Definition \ref{thin inner tube}.
\end{proof}

\begin{remark}
Observe that because the center $a$ of a $p$-adic ball $B=B(a,p^{-n})$ can be chosen arbitrarily without changing its radius $p^{-n},$ the metric boundary of a ball, $\beta B=S=S(a,p^{-n}),$
 depends on the choice of $a$. 
Note, however, that in view of 
Equation~\eqref{spherevolume} in Lemma \ref{ball and sphere}, its volume $\mu_H(S)$ depends only on the radius of $B$.
Similarly,
 even though the decomposition of a $p$-adic fractal string $\Omega$ (i.e., $\String_p$) into maximal balls $B_j=B_j(a_j, p^{-n_j})$ is canonical,
`the' metric boundary of $\String_p,$
 $\beta\mathcal{L}_p = \bigcup_{j=1}^{\infty} S(a_j, r_j),$
 depends on the choice of the centers~$a_j$.
However,
according to Corollary \ref{limit},
$\mu_H(\beta\String_p)$ is independent of this choice and hence, neither 
 $\mathcal V_{\String_p}(\eps)=\mu_H(\mathcal N_{\eps}(\String_p))$ nor 
 $V_{\String_p}(\eps)=\mu_H(N_{\eps}(\String_p))$ 
 depends on the choice of the centers. 
 Indeed, in light of Theorem \ref{thick} and Theorem \ref{thin},~$\mathcal V_{\String_p}(\eps)$ and~$V_{\String_p}(\eps)$
 depend only on the choice of the $p$-adic lengths $p^{-n_j},$ and hence solely on the $p$-adic fractal string $\String_p$, viewed as a nonincreasing sequence of positive numbers, and not on the geometric representation $\Omega$ of $\String_p$, let alone on the choice of the centers of the balls of which $\Omega$ is composed.

Although it is not entirely analogous to it,
 this situation is somewhat  reminiscent of the fact that the volume $V_{\String}(\eps)$ of the inner $\eps$-neighborhoods of an archimedean fractal string depends only on its
lengths~$\left\{l_j\right\}_{j=1}^{\infty}$ and not on the representative 
 $\Omega$ of $\String$ as a bounded open set; see Equation~\eqref{LPformula} and the discussion surrounding it in Remark \ref{real vs. p-adic volume}.
\end{remark}

\begin{remark}[Comparison between the archimedean and the nonarchimedean cases]\label{real vs. p-adic volume}
Recall that $\mathcal V_{\String_p}(\eps)$ does not tend to zero as $\eps \to 0^+,$
 but that instead it tends to the positive number $(1-p^{-1})\zeta_{\String_p}(1),$
 whereas $V_{\String_p}(\eps)$ does tend to zero.
This is the reason why the Minkowski dimension must be defined in terms of $V_{\String_p}(\eps)$ (as will be done in \S \ref{Mdimension}) rather than in terms 
          of $\mathcal V_{\String_p}(\eps).$
                    Indeed, if $\mathcal V_{\String_p}(\eps)$ were used instead, then every $p$-adic fractal string would have Minkowski dimension 1. This would be the case even for a trivial $p$-adic fractal string composed of a single interval, for example. This is also why, in the $p$-adic case, we will focus only on the tube formula for $V_{\String_p}(\eps)$
                    rather than  for   $\mathcal V_{\String_p}(\eps)$, although the latter could be obtained by means of the same techniques.

Note the difference between the expressions for $V_{\String}(\eps)$ in the case of an archimedean fractal string $\String$ and for its nonarchimedean thin (resp., thick) counterpart 
    $V_{\String_p}(\eps)$ (resp., $\mathcal V_{\String_p}(\eps)$)
    in the case of a $p$-adic fractal string $\String_p$. 
    Compare Equation (8.1) of~\cite{L-vF2} (which was first obtained in~\cite{LapPo}),
    \begin{equation}\label{LPformula}
    V_{\String}(\eps)=\sum_{j: l_j\geq 2\eps}2\eps +\sum_{j: l_j< 2\eps}l_j,
    \end{equation}
    with Equations~\eqref{volumeequation}--\eqref{equation2} in Theorem \ref{thin}.
(Here, we are using the notation of Remark \ref{afs}, to which the reader is referred to for a brief introduction to real fractal strings.) 
          It follows, in particular, that   $V_{\String}(\eps)$ is a continuous function of $\eps$ on $(0,\infty)$, whereas  $\mathcal V_{\String_p}(\eps)$ (and hence also $V_{\String_p}(\eps)$) is discontinuous (because it is a step function with jump discontinuities at each point $p^{-n_j}, \mbox{for}~ j=1, 2, 3, \ldots $).    
           The above discrepancies between the archimedean and the nonarchimedean cases help explain why the tube formula for real and $p$-adic fractal strings have a similar form, but with different expressions for the corresponding `tubular zeta function' (in the sense of~\cite{LapPe1,LPW}). We note that a minor aspect of these discrepancies is that $2\eps$ is now replaced by $\eps.$ Interestingly, this is due to the fact that the unit interval $[0,1]$ has inradius 1/2 in $\R=\Q_{\infty}$ whereas 
          $\Z_p$ has inradius 1 in $\Q_p.$ Recall that the \emph{inradius} of a subset $E$ of a metric space is the supremum of the radii of the balls entirely contained in $E$.
          
          Finally, we note that for an archimedean fractal string $\String$, there is no reason to distinguish between the `thin volume' $V_{\String}$ and the `thick volume' 
          $\mathcal V_{\String}$, as we now explain. Indeed, the archimedean analogue $\beta \String$ of the metric boundary is a countable set, and hence has measure zero, no matter which geometric realization $\Omega$ one chooses for $\String$.
          More specifically, in the notation of Remark \ref{afs}, $\beta \String$ consists of all the endpoints of the open intervals $I_j$ (the connected components of $\Omega$, or equivalently, its convex components). Hence, $\mu_L(\beta \String)=0$ and so 
\[
V_{\String}(\eps):=\mathcal V_{\String}(\eps)-\mu_L(\beta \String)=\mathcal V_{\String}(\eps),
\]
as claimed.

For example,
if $\String$ is the ternary Cantor string $\mathcal {CS}$,
then $\beta \String$ is the countable set consisting of all the endpoints of the `deleted intervals'
in the construction of the real Cantor set $\mathcal {C}$.
In other words,
$\beta\String$ is the set $\mathcal T$ of ternary points (which has measure zero because it is countable).
  Hence, the metric boundary $\beta \String$ of $\mathcal{CS}$  is dense in 
  $\partial \String$, the topological boundary of $\mathcal {CS}$, and which in the present case, coincides with the ternary Cantor set $\mathcal C$. Also note that the fact that 
  $\mathcal C=\partial \String$ (and not $\mathcal T=\beta\String$) has measure zero is purely coincidental and completely irrelevant here. 
          Indeed, the same type of argument would apply if $\String$ were any archimedean fractal string, even if $\mu_L(\partial \String)>0$ as is the case for example, if $\partial\String$ is a `fat Cantor set' (i.e., a Cantor set of positive measure) or,
 more generally,
 if $\partial\String$ is a `fat fractal'
 (in the sense of~\cite{GMOY,Ott}).
          The underlying reason is that in the archimedean case, the topological boundary $\partial \String=\partial \Omega$ is disjoint from $\Omega$  (since $\Omega$ is open), and hence, does not play any role in the computation of $V_{\String}(\eps)$ or of $\mathcal V_{\String}(\eps)$. 
          By contrast, it is not  true that the metric boundary  $\beta \String$ and the geometric representation $\Omega$ are disjoint (since, in fact,  $\beta \String \subseteq \Omega$), but what is remarkable is that the Minkowski dimension of $\beta \String$ coincides with that of its closure, and hence (in most cases of interest),
           with $D_{M,\String}$.
\end{remark}

\subsection{Example: The Euler String}
\label{euler volume}

As a first application of Theorem \ref{thin}, we can obtain, via a direct computation, a tube formula for the $p$-adic Euler string $\mathcal E_p$; that is, an explicit formula for the volume of the thin inner 
 $\eps$-neighborhood, $V_{\mathcal E_p}(\eps)$, as given in Definition \ref{thin inner tube}. 
     
Let $\mathcal E_p$ be the $p$-adic Euler string defined in \S\ref{Euler string}.  Given  
  $\eps >0$, let $k$ be the largest integer such that $\mu_H(a_k+p^k\Z_p)=p^{-k}\geq \eps$;
 then $k=[\log_p \eps^{-1}]$. (Here, for $x\in \R$, we write $x=[x]+\{x\},$ where $[x]$ is the integer part  and $\{x\}$ is the fractional part of 
$x$; i.e., $[x]\in \Z$ and $0\leq \{x\}<1$.) Thus,
by Equation (\ref{volumeequation}) of Theorem \ref{thin},
we have, 
\[
V_{\mathcal E_p}(\eps)
=p^{-1}\sum_{n=k+1}^{\infty} p^{-n}
=\frac{p^{-1}}{p-1}p^{-k}
= \frac{p^{-1}}{p-1}p^{-\log_p \eps^{-1} }
  \left(\frac{1}{p}\right)^{-\{\log_p\eps^{-1}\}}\label{V(epsilon)},
\]
since $k=\log_p \eps^{-1} -\{\log_p \eps^{-1} \}$.
Next,
the Fourier series expansion for $b^{-\{x\}}$ is given by
(see~\cite[Eq.~(1.13)]{L-vF2})
    \begin{equation}\label{fourier}
    b^{-\{x\}}=\frac{b-1}{b}\sum_{n\in \Z}\frac{e^ {2\pi i n x}}{\log b+2\pi in},
    \end{equation}
Applying it with $b=1/p$ and $x=\log_p\eps ^{-1}$,
we find
\begin{align}
V_{\mathcal E_p}(\eps)
&= \frac{p^{-1}}{p-1}  \frac{p-1}{\log p}\sum_{n\in \Z}\frac{\eps^{1 -in\textbf p}}{1 -in\textbf p}\nonumber  \\
  &= \frac{1}{p\log p}\sum_{\omega \in \mathcal D_{\mathcal E_p} } 
     \frac{\eps^{1-\omega}}{ 1-\omega }.\label{VolumeEuler}
\end{align}
    Finally, in the last equality, we have used Equation (\ref{cdes}) for the set of complex dimensions $\mathcal D_{\mathcal E_p}$ of $\mathcal E_p$. 

\section{Minkowski Dimension}
\label{Mdimension}

In the sequel, the (inner) Minkowski dimension and the (inner) Minkowski content of a $p$-adic fractal string $\String_p$ (or, equivalently, of its metric boundary $\beta \String_p$, see Definition
  \ref{volume definition}) is defined exactly as the corresponding notion for a real fractal string 
  (see~\cite[Defn.~1.2]{L-vF2}), 
 except for the fact that we now use the definition of $V(\eps)=V_{\String_p}(\eps)$ provided in Equation (\ref{volume of the thin inner tube}) of Definition \ref{thin inner tube}.
(For reasons that will be clear to the reader later on in this section,
we denote by $D_M=D_{M,\String_p}$ instead of by $D=D_{\String_p}$ the Minkowski dimension of $\String_p.$) More specifically,
 the \emph{Minkowski dimension} of $\String_p$ is given by 
\begin{equation}\label{dimension}
D_M= D_{M,\String_p}:= \inf  \left\{\alpha \geq  0~|~ V_{\String_p}(\eps)=O(\eps ^{1-\alpha}) ~\mbox{as}~
\eps \rightarrow 0^{+}  \right\}.  \end{equation}
Furthermore, $\String_p$ is said to be \emph{Minkowski measurable}, with \emph{Minkowski content}
$\mathcal M$, if the limit 
\begin{equation}
\mathcal M=\lim_{\eps\rightarrow 0^{+}} V_{\String_p}(\eps)\eps ^{-(1-D_M)}
\end{equation}
exists in  $(0, \infty).$

\begin{remark}
Note that since 
$V_{\String_p}(\eps)=\mathcal V_{\String_p}(\eps)-\mu_H(\beta \String_p)$,
the above definition of the Minkowski dimension is somewhat analogous to that of ``exterior dimension'', which is sometimes used in the archimedean case to measure the roughness of a `fat fractal' (i.e., a fractal with positive Lebesgue measure). The notion of exterior dimension has been useful in the study of aspects of chaotic nonlinear dynamics;
see, e.g.,~\cite{GMOY} and the survey article~\cite{Ott}.
\end{remark}

The goal of the rest of this section is to establish the following theorem,
which is the exact analogue for $p$-adic fractal strings of~\cite[Thm.~1.10]{L-vF2}, which was first observed in \cite{Lap, Lap3} by using a result of \cite{BT}. (Recall that $\sigma_{\mathcal{L}_p}$ is defined in Equation \eqref{sigma} of \S\ref{GZF}. Also note that we need to assume that $\String_p$ has infinitely many lengths since if $\String_p$ is composed of finitely many intervals,
then $\sigma_{\String_p}=-\infty$ and $D_M=D=0$; see formula \eqref{E:dimension} below for the definition of $D$.\label{F: j->infty})

\begin{theorem}\label{T:p-adic D_M=sigma}
Let\/ $\String_p$ be a $p$-adic fractal string composed of infinitely many intervals.
Then the Minkowski dimension\/ $D_M=D_{M,\String_p}$ of\/ $\String_p$ equals the abscissa of convergence\/ $\sigma_{\String_p}$ of the geometric zeta function\/ $\zeta_{\String_p}$.
That is,
$D_M=\sigma_{\String_p}$.
\end{theorem}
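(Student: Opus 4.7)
The strategy is to prove the two inequalities $D_M\le\sigma_{\String_p}$ and $\sigma_{\String_p}\le D_M$ separately, working entirely from the closed-form expression $V_{\String_p}(\eps)=p^{-1}\sum_{j:p^{-n_j}<\eps}p^{-n_j}$ supplied by Theorem~\ref{thin}. The special feature of the $p$-adic setting—that each length belongs to the discrete geometric scale $\{p^{-k}\}_{k\in\Z}$—will allow a frictionless passage between estimates on $V_{\String_p}(\eps)$ and estimates on the frequency counts $|A_m|:=\#\{j:n_j=m\}$, which amount to a rescaled form of the geometric counting function.

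For the direction $D_M\le\sigma_{\String_p}$, I would fix $\alpha>\sigma_{\String_p}$, reducing immediately to the case $\alpha<1$ via the trivial bound $V_{\String_p}(\eps)\le p^{-1}\zeta_{\String_p}(1)$ (observing, as noted in the preceding remark, that $\sigma_{\String_p}\le 1$ since $\zeta_{\String_p}(1)=\mu_H(\String_p)<\infty$). For every $j$ with $p^{-n_j}<\eps$, the elementary split $p^{-n_j}=(p^{-n_j})^{\alpha}(p^{-n_j})^{1-\alpha}\le(p^{-n_j})^{\alpha}\eps^{1-\alpha}$ factors an $\eps^{1-\alpha}$ out of the tail sum, leaving the finite quantity $\zeta_{\String_p}(\alpha)$. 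This yields $V_{\String_p}(\eps)=O(\eps^{1-\alpha})$, whence $D_M\le\alpha$; letting $\alpha\searrow\sigma_{\String_p}$ finishes this half.

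For the reverse inequality, I would take any $\alpha>D_M$, so that $V_{\String_p}(\eps)\le C\eps^{1-\alpha}$ for all small $\eps$, and specialize to $\eps=p^{-m+1}$. Theorem~\ref{thin} then expresses $p\,V_{\String_p}(p^{-m+1})$ as $\sum_{k\ge m}|A_k|\,p^{-k}$, whose leading term $|A_m|\,p^{-m}$ must itself be $O(p^{-m(1-\alpha)})$; rearranging gives $|A_m|\le C'p^{m\alpha}$ for all large $m$. For any fixed $\beta>\alpha$, the zeta series decomposes as $\sum_{j}p^{-n_j\beta}=\sum_{m}|A_m|\,p^{-m\beta}$, which is dominated by the convergent geometric series $C'\sum_{m}p^{m(\alpha-\beta)}$. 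Hence $\sigma_{\String_p}\le\beta$; letting $\beta\searrow\alpha\searrow D_M$ concludes the argument.

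The main (and essentially only) obstacle is the reverse inequality, whose substance is extracting pointwise frequency counts $|A_m|$ from the aggregate volume bound. In the archimedean setting this step is classically handled via Abel summation or a Tauberian comparison, but the $p$-adic discretization of length scales here lets one simply read off $|A_m|$ as the leading term in a single power-of-$p$ tail sum. This transparency is precisely what will allow the same argument—once recast in terms of the geometric counting function $N_{\String}(x):=\#\{j:1/l_j\le x\}$—to be transported verbatim to archimedean fractal strings, producing the unified derivation of \cite[Thm.~1.10]{L-vF2} anticipated in the introduction.
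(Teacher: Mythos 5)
Your proof is correct, but it follows a genuinely different route from the paper's. The paper first proves a general integral-representation lemma (Lemma~\ref{L:zeta=N,V}), valid for an arbitrary summable sequence of lengths, expressing $\zeta_\String(s)$ both as $\zeta_\String(1)l_1^{s-1}+(1-s)\int_0^{l_1}pV(\eps)\eps^{s-2}\,d\eps$ and as $s\int_0^\infty N(x)x^{-s-1}\,dx$, with convergence of either integral equivalent to convergence of $\sum_j l_j^s$; the theorem then follows by showing these integrals converge for $s>\alpha$ when $\alpha>D_M$ (resp.\ $\alpha > D$) and diverge for $s\le\alpha$ when $\alpha<D_M$, the latter via an exponentially decreasing sequence of $\eps$-values witnessing the failure of $V(\eps)=O(\eps^{1-\alpha})$. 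You instead work directly with the discrete sums: a H\"older-type splitting $p^{-n_j}\le (p^{-n_j})^\alpha\eps^{1-\alpha}$ for the easy inequality $D_M\le\sigma_{\String_p}$, and extraction of the multiplicities $|A_m|$ from the leading term of $pV_{\String_p}(p^{-m+1})=\sum_{k\ge m}|A_m|p^{-k}$ for the converse. Both steps are sound (the upward closure of $\{\alpha: V=O(\eps^{1-\alpha})\}$ and the identity $\{j:p^{-n_j}<p^{-m+1}\}=\{j:n_j\ge m\}$ make the leading-term extraction airtight), so your argument is a shorter, more elementary proof of the theorem as stated. What the paper's heavier machinery buys is generality: Lemma~\ref{L:zeta=N,V} makes no use of the lengths being powers of $p$, so the same proof yields the additional equality $\sigma=D$ with the growth rate of the counting function (Theorem~\ref{T:p-adic D=sigma=D_M}) and transfers to archimedean strings in \S\ref{TRC}. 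Be careful with your closing claim that your argument transports ``verbatim'' to the real case: the step where you read off $|A_m|$ as the leading coefficient of a power-of-$p$ expansion genuinely uses the discreteness of the $p$-adic valuation group, and in the archimedean setting one must fall back on an Abel-summation or integral-representation argument of exactly the kind the paper develops.
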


Theorem~\ref{T:p-adic D_M=sigma} will be established in Theorem~\ref{T:p-adic D=sigma=D_M} below in greater generality,
namely for any summable sequence of positive numbers $l_j$.
This is the object of the technical Lemma~\ref{L:zeta=N,V},
which is of independent interest.

When applied to a $p$-adic fractal string,
the following lemma relates the thin volume with the zeta function.
For completeness,
but independently of this,
we also formulate the counterpart for the counting function of the reciprocal lengths of an arbitrary fractal string.
The lemma holds in general,
independently of the fact that in the present situation,
the lengths are powers of $p$.
Recall from~\eqref{volumeequation} that
$$
V(\eps)=\frac1p\sum_{j\colon l_j\leq\eps}l_j,
$$
writing $V$ instead of $V_{\String_p}$ since what follows holds for arbitrary infinite sequences of positive numbers~$l_j$ such that $\sum_{j=1}^\infty l_j$ is convergent.
Also, the {\em geometric counting function} of $\mathcal{L} := \{l_j\}_{j=1}^\infty$,
\begin{gather}\label{E:Nx}
N(x):=\sum_{l_j\geq1/x}1,
\end{gather}
 is the number of reciprocal lengths up to $x > 0$,
and
\begin{gather}\label{E:zetaL}
\zeta_\String(s):=\sum_{j=1}^\infty l_j^s,
\end{gather}
at least for all $s \in \mathbb{C}$ such that $\mfr s > 1$.

\begin{lemma}\label{L:zeta=N,V}
We have the following two expressions for $\zeta_\String(s):$
\begin{gather}\label{E:zeta=V}
\zeta_\String(s)=\zeta_\String(1)l_1^{s-1}+(1-s)\int_0^{l_1}pV(\eps)\eps^{s-2}\,d\eps,
\end{gather}
and
\begin{gather}\label{E:zeta=N}
\zeta_\String(s)=s\int_0^\infty N(x)x^{-s-1}\,dx.
\end{gather}
Both expressions converge exactly when $\sum_{j=1}^\infty l_j^s$ converges.
\end{lemma}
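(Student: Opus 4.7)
The plan is to prove both identities by interchanging sum and integral via Tonelli's theorem in the half-plane of absolute convergence, and to extract the equivalence of convergence as a byproduct.

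For \eqref{E:zeta=N}, I would write $N(x)=\sum_{j\ge 1}\mathbf{1}_{[1/l_j,\infty)}(x)$, substitute into $s\int_0^\infty N(x)x^{-s-1}\,dx$, and apply Tonelli (first for real $s$, where the integrand is nonnegative) to obtain $s\sum_j\int_{1/l_j}^\infty x^{-s-1}\,dx=\sum_j l_j^s$. The identity then extends to complex $s$ in the half-plane of absolute convergence since $|x^{-s-1}|=x^{-\Re(s)-1}$, and the $\sum$/$\int$ exchange is dominated by its real-$s$ counterpart.

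For \eqref{E:zeta=V}, I would write $pV(\eps)=\sum_{j\ge 1}l_j\,\mathbf{1}_{\{l_j<\eps\}}$ using Theorem~\ref{thin}. On $(0,l_1]$ only those indices with $l_j<l_1$ contribute (an index $j$ with $l_j=l_1$ would require $\eps>l_1$), so Tonelli yields
\[
\int_0^{l_1}pV(\eps)\eps^{s-2}\,d\eps=\sum_{j:\,l_j<l_1}l_j\cdot\frac{l_1^{s-1}-l_j^{s-1}}{s-1},
\]
with the case $s=1$ handled by continuity (or a direct substitution). Multiplying by $(1-s)$ gives $\sum_{j:\,l_j<l_1}(l_j^s-l_1^{s-1}l_j)$. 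Adding the boundary term $\zeta_\String(1)l_1^{s-1}=l_1^{s-1}\sum_j l_j$, the $l_1^{s-1}$-pieces collapse to $l_1^{s-1}\sum_{j:\,l_j=l_1}l_j=\sum_{j:\,l_j=l_1}l_j^s$, which combines with $\sum_{j:\,l_j<l_1}l_j^s$ to recover $\sum_j l_j^s=\zeta_\String(s)$.

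The convergence statement then drops out of the same computation: applying Tonelli to $|x^{-s-1}|=x^{-\Re(s)-1}$ and $|\eps^{s-2}|=\eps^{\Re(s)-2}$ shows that absolute convergence of either integral is equivalent to convergence of $\sum_j l_j^{\Re(s)}$, and hence to convergence of $\sum_j l_j^s$. The only point that requires care, rather than being a real obstacle, is the correct bookkeeping of the indices with $l_j=l_1$ in the boundary correction term of \eqref{E:zeta=V}; once this multiplicity is tracked, the rest is a routine exchange-of-order argument.
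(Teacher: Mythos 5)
Your proof is correct, and it takes a genuinely different route from the paper's. You decompose $N(x)=\sum_j\mathbf{1}_{[1/l_j,\infty)}(x)$ and $pV(\eps)=\sum_{j}l_j\mathbf{1}_{\{l_j<\eps\}}$ and invoke Tonelli (for real $s$, where the integrands are nonnegative) to swap sum and integral in one stroke, obtaining equality in $[0,\infty]$ and hence the ``converges exactly when'' claim as an automatic byproduct; the extension to complex $s$ by domination is standard. The paper instead works with the truncated integrals $\int_{l_n}^{l_1}$ and $\int_0^{l_n^{-1}}$, computes them exactly as finite double sums (the integrand being a step function on each $[l_{j+1},l_j)$), interchanges the finite sums, telescopes, and then analyzes the limit $n\to\infty$ by hand, using tail estimates such as $l_kl_n^{s-1}\le l_k^s$ for $k\ge n$. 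Your argument is shorter and cleaner; the paper's buys an explicit expression for the truncated integral, including the remainder term $\sum_{k\ge n}l_kl_n^{s-1}-nl_n^s$, which makes the equivalence of convergence transparent even in the sense of improper (conditional) convergence of the integral. Your index bookkeeping for the terms with $l_j=l_1$ in \eqref{E:zeta=V} is right and is indeed the only delicate point: those indices contribute nothing to the integral on $(0,l_1]$, and their share of $\zeta_\String(s)$ is exactly recovered from the boundary term $\zeta_\String(1)l_1^{s-1}$ since $l_1^{s-1}l_j=l_j^s$ when $l_j=l_1$. The one caveat, which applies equally to the paper's proof, is that at a complex $s$ where $\sum_j l_j^s$ converges only conditionally the equivalence should be read in terms of absolute convergence (equivalently, convergence at $\Re(s)$); since the lemma is applied in Theorem~\ref{T:p-adic D=sigma=D_M} only for real $s$, this is immaterial.
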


\begin{proof}
For $n>0$,
we compute
\begin{align*}
(1-s)\int_{l_n}^{l_1}pV(\eps)\eps^{s-2}\,d\eps
&=\sum_{j=1}^{n-1}(1-s)\int_{l_{j+1}}^{l_j}pV(\eps)\eps^{s-2}\,d\eps\\
&=\sum_{j=1}^{n-1}\sum_{k=j+1}^\infty l_k(1-s)\int_{l_{j+1}}^{l_j}\eps^{s-2}\,d\eps,
\end{align*}
since for $l_{j+1}\leq \eps<l_j$,
the function $pV(\eps)$ is constant,
equal to $\sum_{k>j} l_k$.
We compute the integral to obtain
\begin{align*}
(1-s)\int_{l_n}^{l_1}pV(\eps)\eps^{s-2}\,d\eps
=\sum_{j=1}^{n-1}\sum_{k=j+1}^\infty l_k\bigl(l_{j+1}^{s-1}-l_j^{s-1}\bigr).
\end{align*}
Next,
we split the sum and interchange the order of summation,
to obtain
\[
\sum_{j=2}^{n}\sum_{k=j}^\infty l_kl_{j}^{s-1}-\sum_{j=1}^{n-1}\sum_{k=j+1}^\infty l_kl_j^{s-1}
=\sum_{k=2}^\infty l_k\sum_{j=2}^{\min\{k,n\}} l_{j}^{s-1}
-\sum_{k=2}^\infty l_k\sum_{j=1}^{\min\{k,n\}-1}l_j^{s-1}.
\]
In this formula,
the two double sums clearly converge,
since $\sum_{k\geq1}l_k$ converges.
Simplifying,
we obtain
\begin{align*}
(1-s)\int_{l_n}^{l_1}pV(\eps)\eps^{s-2}\,d\eps
&=\sum_{k=2}^\infty l_kl_{\min\{k,n\}}^{s-1}-\sum_{k=2}^\infty l_kl_1^{s-1}\\
&=\sum_{k=2}^{n-1} l_k^s+\sum_{k=n}^\infty l_kl_n^{s-1}-l_1^{s-1}\sum_{k=2}^\infty l_k.
\end{align*}

Now,
$l_kl_n^{s-1}\leq l_k^{s}$ for $k\geq n$ (if $s\geq1$,
we estimate instead $l_n^{s-1}\leq1$,
pro\-vi\-ded~$n$ is so large that $l_n\leq1$).
Hence we can let~$n$ approach infinity if and only if $\sum_{k=2}^\infty l_k^s$ converges,
and then the middle sum converges to zero.
In that case, we obtain $\sum_{k\geq1}l_k^s-l_1^{s-1}\zeta_\String(1)=\zeta_\String(s)-\zeta_\String(1)l_1^{s-1}$ for the limit.

In a similar way,
we compute
\begin{align*}
s\int_0^{l_n^{-1}} N(x)x^{-s-1}\,dx
=\sum_{j=1}^{n-1}s\int_{l_j^{-1}}^{l_{j+1}^{-1}}N(x)x^{-s-1}\,dx
=\sum_{j=0}^{n-1}j\bigl(l_j^s-l_{j+1}^s\bigr),
\end{align*}
since $N(x)=0$ for $x<l_1^{-1}$,
and $N(x)=j$ for $l_j^{-1}\leq x<l_{j+1}^{-1}$.
We find
\begin{align*}
s\int_0^{l_n^{-1}} N(x)x^{-s-1}\,dx
=\sum_{j=1}^{n-1}jl_j^s-\sum_{j=1}^{n}(j-1)l_{j}^s
=\sum_{j=1}^{n}l_j^s-nl_{n}^s.
\end{align*}
Now,
$nl_n^s\leq2\sum_{j=[n/2]}^nl_j^s$,
provided $s\geq0$,
so we can let $n$ approach infinity if and only if $\sum_{j=1}^\infty l_j^s$ converges,
in which case we find the value $\zeta_\String(s)$ for the limit,
again since the tail $\smash{\sum_{j=[n/2]}^\infty l_j^s}$ converges to zero.
\end{proof}

Recall that the Minkowski dimension $D_M$ was defined in~\eqref{dimension} above.
We also define the {\em growth rate} of $\mathcal{L}$ (or {\em asymptotic growth rate} of the geometric counting function $N := N_\mcl$) by 

\begin{gather}\label{E:dimension}
D:=\inf\bigl\{\alpha \geq 0~|~ N(x)=O(x^\alpha)\text{ as }x\to\infty\bigr\}.
\end{gather}

\begin{theorem}\label{T:p-adic D=sigma=D_M}
Assume that the hypotheses of Theorem \ref{T:p-adic D_M=sigma} are satisfied. Then $\sigma_{\mathcal{L}}$, the abscissa of convergence of\/ $\zeta_\String$, coincides with\/ $D_M$ and with\/~$D$. That is, $D_M = \sigma_\mathcal{L} = D$.
\end{theorem}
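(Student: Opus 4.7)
The plan is to derive the theorem as a direct consequence of the Mellin-type integral representations in Lemma~\ref{L:zeta=N,V}, together with elementary comparison arguments between polynomial bounds and convergence of a positive series. Each of the two equalities $\sigma_{\String}=D$ and $\sigma_{\String}=D_M$ is proved by two opposite inequalities, one in each case coming from using the integral formulas \eqref{E:zeta=N} and \eqref{E:zeta=V} to turn a polynomial growth estimate into a convergence region for $\zeta_{\String}$, and the other coming from a direct length-count estimate in the opposite direction.

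First I would handle $D=\sigma_{\String}$. If $N(x)=O(x^\alpha)$ for some $\alpha\geq 0$, then since $N(x)=0$ for $x<l_1^{-1}$, the integrand $N(x)x^{-s-1}$ in \eqref{E:zeta=N} is $O(x^{\alpha-\Re s-1})$, integrable on $(0,\infty)$ whenever $\Re s>\alpha$. Lemma~\ref{L:zeta=N,V} then forces $\sum l_j^s$ to converge in that half-plane, so $\sigma_{\String}\leq\alpha$; taking the infimum yields $\sigma_{\String}\leq D$. Conversely, fix any $\alpha>\sigma_{\String}$, so $\zeta_{\String}(\alpha)<\infty$. Every reciprocal length counted by $N(x)$ satisfies $l_j\geq 1/x$, hence $l_j^{\alpha}\geq x^{-\alpha}$, so
\[
\zeta_{\String}(\alpha)\;\geq\;\sum_{l_j\geq 1/x} l_j^{\alpha}\;\geq\;N(x)\,x^{-\alpha},
\]
which gives $N(x)\leq\zeta_{\String}(\alpha)\,x^{\alpha}$ and hence $D\leq\alpha$. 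Letting $\alpha\downarrow\sigma_{\String}$ yields $D\leq\sigma_{\String}$.

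Next I would prove $D_M=\sigma_{\String}$ by the same template, now using~\eqref{E:zeta=V}. If $V(\eps)=O(\eps^{1-\alpha})$ as $\eps\to 0^+$, then $pV(\eps)\eps^{s-2}=O(\eps^{s-1-\alpha})$, and hence the integral in~\eqref{E:zeta=V} converges absolutely on $(0,l_1]$ whenever $\Re s>\alpha$; by the lemma, $\sigma_{\String}\leq\alpha$, so $\sigma_{\String}\leq D_M$. For the reverse inequality, fix any $\alpha>\sigma_{\String}$. When $\alpha\leq 1$, for every $j$ with $l_j<\eps$ we have $l_j^{1-\alpha}\leq\eps^{1-\alpha}$, so
\[
p\,V(\eps)\;=\;\sum_{l_j<\eps}l_j\;=\;\sum_{l_j<\eps}l_j^{\alpha}\,l_j^{1-\alpha}\;\leq\;\eps^{1-\alpha}\,\zeta_{\String}(\alpha),
\]
giving $V(\eps)=O(\eps^{1-\alpha})$ and hence $D_M\leq\alpha$. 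When $\alpha>1$, the bounded function $V(\eps)\leq\zeta_{\String}(1)/p$ is automatically $O(\eps^{1-\alpha})$ as $\eps\to 0^+$. Letting $\alpha\downarrow\sigma_{\String}$ yields $D_M\leq\sigma_{\String}$, completing the chain of equalities.

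The arguments are essentially bookkeeping once Lemma~\ref{L:zeta=N,V} is in hand, and I expect no real obstacle; the only point requiring a little care is the split $\alpha\leq 1$ vs.\ $\alpha>1$ in the last estimate, which is harmless because the function $V$ is uniformly bounded by the total volume $\zeta_{\String}(1)/p$. A minor technicality is that the whole argument requires infinitely many lengths (so that $N$ is unbounded and $V$ does not vanish identically near $0$), which is precisely the standing hypothesis of Theorem~\ref{T:p-adic D_M=sigma}.
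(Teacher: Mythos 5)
Your proposal is correct, and the two ``easy'' inequalities $\sigma_\String\leq D$ and $\sigma_\String\leq D_M$ are obtained exactly as in the paper, by feeding the polynomial bounds on $N$ and $V$ into the integral representations \eqref{E:zeta=N} and \eqref{E:zeta=V} of Lemma~\ref{L:zeta=N,V}. Where you genuinely diverge from the paper is in the converse inequalities. The paper proves $D\leq\sigma_\String$ and $D_M\leq\sigma_\String$ by contraposition \emph{through the same lemma}: assuming $\alpha<D$ (resp.\ $\alpha<D_M$), it extracts an exponentially increasing sequence $x_j$ with $N(x_j)\geq x_j^\alpha$ (resp.\ an exponentially decreasing sequence $\eps_j$ with $V(\eps_j)\geq\eps_j^{1-\alpha}$), uses monotonicity of $N$ and $V$ to bound the integrals from below by divergent geometric-type sums for $s\leq\alpha$, and concludes $\sigma_\String\geq\alpha$. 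You instead prove these directions directly by Chebyshev-type estimates that bypass the lemma entirely: $N(x)x^{-\alpha}\leq\sum_{l_j\geq 1/x}l_j^\alpha\leq\zeta_\String(\alpha)$ and $pV(\eps)\leq\eps^{1-\alpha}\zeta_\String(\alpha)$ for $\sigma_\String<\alpha\leq 1$, with the trivial boundedness of $V$ covering $\alpha>1$. Both arguments are sound; yours is more elementary and avoids the subsequence extraction and the lower-bound estimates on the integrals (including the restriction to $s\leq 1$ and the verification that the witnessing sequence can be taken exponentially spaced), at the cost of a slight asymmetry in the overall proof, since only half of it now runs through Lemma~\ref{L:zeta=N,V}. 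One point worth making explicit in your write-up is that $\alpha>\sigma_\String\geq 0$ (the string has infinitely many lengths), which is what justifies the step $l_j\geq 1/x\Rightarrow l_j^\alpha\geq x^{-\alpha}$.
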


\begin{proof}
Let $\alpha > D_M.$ Since, by definition of $D_M, V(\eps)\leq A\eps^{1-\alpha}$,
then
\[ (1-s)\int_0^{l_1}V(\eps)\eps^{s-2}\,d\eps\leq A(1-s)\int_0^{l_1}\eps^{s-\alpha-1}\,d\eps. \] 
(Here, $A$ is some suitable positive constant.)
This integral converges for all real numbers $s>\alpha$;
hence,
by the foregoing lemma (Lemma \ref{L:zeta=N,V}),
$\sigma\leq\alpha$, where (for notational simplicity) $\sigma = \sigma_\mathcal{L}$ denotes the abscissa of convergence of $\mathcal{L} := \{\ell_j\}_{j=1}^\infty$.
Since this holds for all $\alpha \in \mathbb{R}$ such that $\alpha>D_M$,
we conclude that $\sigma\leq D_M$. Conversely,
if $\alpha<D_M$,
then $V(\eps)$ is not $O(\eps^{1-\alpha})$ as $\eps \rightarrow 0^+$.
This means that there exists a sequence $\{\eps_n\}_{n=0}^\infty$ converging to $0,$ with $l_1\geq\eps_0>\eps_1>\eps_2>\dots$ and such that $V(\eps_j)\geq\eps_j^{1-\alpha}$ for every $j \geq 1$.
Moreover,
we may choose the sequence to be exponentially decreasing; say,
$\eps_{n+1}<\eps_n/2$ for every $n \geq 1$.
Then,
for $s\leq1$,
\[
(1-s)\int_0^{l_1}V(\eps)\eps^{s-2}\,d\eps
\geq\sum_{j=1}^\infty(1-s)\int_{\eps_{j}}^{\eps_{j-1}}\eps_j^{1-\alpha}\eps^{s-2}\,d\eps,
\]
since $V(\eps)$ is increasing.
We then estimate
\[
(1-s)\int_{\eps_{j}}^{\eps_{j-1}}\eps^{s-2}\,d\eps
=\eps_{j}^{s-1}-\eps_{j-1}^{s-1}
\geq\eps_j^{s-1}\bigl(1-2^{s-1}\bigr),
\]
to obtain
\[
(1-s)\int_0^{l_1}V(\eps)\eps^{s-2}\,d\eps
\geq\sum_{j=1}^\infty\eps_j^{s-\alpha}\bigl(1-2^{s-1}\bigr).
\]
For all $s \in \mathbb{R}$ such that $s\leq\alpha$,
this sum diverges.
Again by Lemma \ref{L:zeta=N,V},
we conclude that $\sigma\geq\alpha$.
This holds for all $\alpha<D_M$; hence, $\sigma\geq D_M$.
Together with the first part,
we conclude that~$\sigma=D_M$.

Next, we show that $\sigma=D$.
If $\alpha>D$, then $N(x)\leq Ax^{\alpha}$ for some $A>0$.
Then for all $s \in \mathbb{R}$ such that $s>\alpha$, and
noting that $N(x)$ vanishes for $x<l_1^{-1}$, we have that
\[
s\int_{l_1^{-1}}^\infty N(x)x^{-s-1}\,dx\leq \frac{A}{s-\alpha}l_1^{s-\alpha}.
\]
Hence, according to Lemma \ref{L:zeta=N,V},
$\sigma\leq\alpha$.
Since this holds for all $\alpha>D$,
we conclude that $\sigma\leq D$. 

Conversely,
if $\alpha<D$,
then $N(x)$ is not $O(x^{\alpha})$.
This means that there exists an unbounded sequence $\{x_j \}_{j=0}^\infty$ tending to $\infty$, with $l_1^{-1}\leq x_0<x_1<x_2<\dots$ and such that $N(x_j)\geq x_j^{\alpha}$ for every~$j$.
Moreover,
we choose the sequence to be exponentially increasing,
$x_{n+1}>2x_n$.
Then,
for $s\geq0$,
\[
s\int_{l_1^{-1}}^\infty N(x)x^{-s-1}\,dx
\geq\sum_{j=0}^\infty s\int_{x_{j}}^{x_{j+1}}x_j^{\alpha}x^{-s-1}\,dx,
\]
since $N(x)$ is increasing.
We estimate
$
s\int_{x_{j}}^{x_{j+1}}x^{-s-1}\,dx
\geq x_{j}^{-s}\bigl(1-2^{-s}\bigr),
$
to obtain
\[
s\int_{l_1^{-1}}^\infty N(x)x^{-s-1}\,dx
\geq\sum_{j=1}^\infty x_{j}^{\alpha-s}\bigl(1-2^{-s}\bigr).
\]
For $s\leq\alpha$,
this sum diverges.
We conclude that $\sigma\geq\alpha$.
This holds for all $\alpha<D$,
hence $\sigma\geq D$,
and it follows that $\sigma=D$. 

Combining all of the above steps, we conclude that $\sigma = D_M = D,$ as desired.
\end{proof}

\begin{corollary}\label{C:0 D sigma 1}
For any\/ $p$-adic fractal string $\mathcal{L}_p$  with infinitely many lengths,
we have $0\leq D_M=\sigma_{\String_p}\leq1$. Furthermore, we have that $D_M = \sigma_{\mathcal{L}_p} = D,$ where $D$ is the growth rate of $\mathcal{L}_p$ defined by \eqref{E:dimension}.
\end{corollary}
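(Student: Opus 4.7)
The plan is to observe that most of the corollary is already contained in the preceding Theorem~\ref{T:p-adic D=sigma=D_M}, and then to pin down the two remaining inequalities $0 \leq \sigma_{\String_p} \leq 1$ by directly testing convergence of the Dirichlet series $\zeta_{\String_p}(s) = \sum_{j=1}^\infty p^{-n_j s}$ at the endpoints $s=0$ and $s=1$.

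First I would check that the hypotheses of Theorem~\ref{T:p-adic D=sigma=D_M} are met for a $p$-adic fractal string with infinitely many intervals. The sequence of lengths is $l_j = p^{-n_j}$, and its sum is
\[
\sum_{j=1}^\infty p^{-n_j} = \mu_H(\String_p) < \infty,
\]
which is finite because $\String_p$ is a bounded open subset of $\Q_p$ (cf.\ the remark after Theorem~\ref{thick}). Hence the sequence is summable, Lemma~\ref{L:zeta=N,V} applies, and Theorem~\ref{T:p-adic D=sigma=D_M} yields the equalities $D_M = \sigma_{\String_p} = D$.

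Next I would establish the upper bound $\sigma_{\String_p} \leq 1$. This is immediate from the summability above: since $\sum_{j=1}^\infty p^{-n_j}$ converges, the defining formula \eqref{sigma} gives $\sigma_{\String_p} \leq 1$.

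Finally, for the lower bound $\sigma_{\String_p} \geq 0$, I would evaluate the defining series at $\alpha = 0$. Since $p^{-n_j \cdot 0} = 1$ for every $j$ and there are infinitely many intervals by hypothesis, the series $\sum_{j=1}^\infty 1$ diverges, so no $\alpha \leq 0$ can lie in the set defining $\sigma_{\String_p}$ in \eqref{sigma}. Combining with the preceding steps gives $0 \leq D_M = \sigma_{\String_p} = D \leq 1$, as claimed. No real obstacle is expected here: the work has already been done in Theorem~\ref{T:p-adic D=sigma=D_M}, and the boundedness of $\String_p$ together with the infinitude of the length sequence handle the two endpoint inequalities essentially for free.
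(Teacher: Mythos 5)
Your proposal is correct and follows essentially the same route as the paper: the equalities are exactly the content of Theorem~\ref{T:p-adic D=sigma=D_M} (whose summability hypothesis holds because $\zeta_{\String_p}(1)=\mu_H(\String_p)<\infty$), and the endpoint bounds come from convergence at $s=1$ and divergence at $s=0$, just as in the paper's earlier remark following Definition~\ref{dvcd}. The only cosmetic point is that ruling out $\alpha<0$ (not just $\alpha=0$) from the set in \eqref{sigma} uses the standard fact that all but finitely many lengths satisfy $p^{-n_j}<1$, so the terms only grow as $\alpha$ decreases.
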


\subsection{The Real Case}\label{TRC}

For (ordinary)
archimedean fractal strings,
the Minkowski dimension also determines the abscissa of convergence of the geometric zeta function,
in an analogous manner. The advantage of our new proof is that it yields a unified approach to both the archimedean and nonarchimedean (or $p$-adic) cases. It also establishes in the process the new result according to which $D_M$ is not only equal to $\sigma_\mcl$ (the abscissa of convergence of $\mcl$) but also to $D$ (the asymptotic growth rate of $\mcl$), a useful fact which was only implicit in earlier work (such as \cite{Lap, Lap3, LapPo, L-vF1, L-vF2}), even for real (or archimedean) fractal strings and is explicitly needed, for example, in \cite{HerLap, HerLap2, HerLap3}.

We provide here the details of our unified approach, but by focusing, of course, on the real (or archimedean) case.

The geometric zeta function of a real fractal string $\mcl = \{\ell_j \}_{j=1}^\infty,$ with $l_1\geq l_2\geq l_3\geq\dots\to0$, is given by~\eqref{E:zetaL},
just as in the $p$-adic case,
with an abscissa of convergence defined by
\begin{equation}\label{E:sigmaL}
\sigma_\String=\inf \Bigg\{ \alpha \in \mathbb{R} \colon \sum_{j=1}^\infty l_j^\alpha<\infty \Bigg\},
\end{equation}
entirely analogous to~\eqref{sigma}.
We assume an infinite number of positive lengths with a finite total length,
\[
\zeta_\String(1)=\sum_{j=1}^\infty l_j<\infty;
\]
so that $0\leq\sigma_\String\leq 1$.

Since both in the archimedean and in the nonarchimedean case,
the geometric  zeta function and geometric counting function $N_\String$ (and hence $D$) are defined in the same way,
it immediately follows that \eqref{E:zeta=N} holds in the archimedean case as well,
and consequently,
$\sigma_\String=D$.

On the other hand,
the formula for the volume of the tubular neighborhoods is different,
due to the different geometry of the boundary of balls in $\Q_p$ (as we have seen in \S\ref{inner tube}) and of intervals in $\R$.
In particular,
the real unit interval $[0,1]$ has inradius 1/2 in $\R=\Q_{\infty}$ whereas the $p$-adic  unit ball
$\Z_p$ has inradius 1 in $\Q_p$
(see also Remark~\ref{real vs. p-adic volume}).
For real fractal strings,
$V_\String(\eps)$ is given by~\eqref{LPformula},
and hence the formula corresponding to~\eqref{E:zeta=V} is
\begin{gather}\label{E:zeta=V arch}
\zeta_\String(s)=s\zeta_\String(1)l_1^{s-1}+2s(1-s)\int_0^{l_1/2}V_\String(\eps)(2\eps)^{s-2}\,d\eps,
\end{gather}
valid for $\Re s>D_M$,
where $D_M$ is defined by~\eqref{dimension}.
This can be proved by a method similar to the proof of Lemma~\ref{L:zeta=N,V},
but we give here an alternative proof.
The function $V_\String = V_\String(\eps)$ is continuous and piecewise differentiable for $\varepsilon > 0$,
with derivative
\[
V_\String'(\eps)=2N_\String\Bigl(\frac1{2\eps}\Bigr).
\]
Integrating by parts,
we obtain
\begin{align*}
2s(1-s)\int_0^{l_1/2}V_\String(\eps)&(2\eps)^{s-2}\,d\eps\\
&=-s\left[V_\String(\eps)(2\eps)^{s-1}\right]_0^{l_1/2}+2s\int_0^{l_1/2}N_\String(1/2\eps)(2\eps)^{s-1}d\eps\\
&=-s\zeta_\String(1)l_1^{s-1}+s\int_{l_1^{-1}}^\infty N_\String(x)x^{-1-s}\,dx,
\end{align*}
which by~\eqref{E:zeta=N} equals $-s\zeta_\String(1)l_1^{s-1}+\zeta_\String(s)$ since $N_\String(x)=0$ for $x<l_1^{-1}$.

\begin{remark}
As an alternative,
the zeta function is also given by
\[
\zeta_\String(s)=2s(1-s)\int_0^\infty V_\String(\eps)(2\eps)^{s-2}\,d\eps.
\]
This expression only converges for $D_M<\Re s<1$.
\end{remark}

As was already pointed out earlier,
the above formula converges for $\Re s>D_M$.
In addition,
it follows from the proof that the formula converges if and only if the series for $\zeta_\String$ converges.
This implies that $\sigma_\String\leq D_M$.
In order to prove the converse inequality,
we construct,
just as in the proof of Theorem~\ref{T:p-adic D=sigma=D_M},
 a sequence of positive $\eps$-values decreasing exponentially fast to zero in order to show that~\eqref{E:zeta=V arch} does not converge for $\Re s<D_M$.
It follows that $\sigma_\String=D_M$.

We conclude that for a real (or archimedean) fractal string, we have $D_M = \sigma_\mathcal{L} = D$, 
just as was shown in the first part of this section for a $p$-adic (or nonarchimedean) fractal string, 
and thereby completing the statement and the proof of Theorem \ref{T:p-adic D_M=sigma} and Theorem \ref{T:p-adic D=sigma=D_M} (now extended to the real case), 
as well as providing a unified treatment of both the archimedean and nonarchimedean cases.

\section{Explicit Tube Formulas for $p$-adic Fractal Strings} \label{explicit tf} 

The following result is the counterpart in this context of Theorem 8.1 of~\cite{L-vF2},
the distributional tube formula for real fractal strings.  It is established by using, in particular, the extended distributional explicit formula of~\cite[Thms.~5.26 and~5.27]{L-vF2},
along with the expression for the volume of thin inner $\eps$-tubes obtained in Theorem \ref{thin}.

We now state our general nonarchimedean (or $p$-adic) fractal tube formula in this context.

\begin{theorem}[$p$-adic explicit tube formula]\label{dtf}
$($i$)$
Let $\String_p$ be a languid $p$-adic fractal string $($as in the first part of Definition \ref{languid}$)$, for some real exponent $\kappa$ and a screen $S$ that lies strictly to the left of the vertical line $\Re(s)=1$.  
Further assume that $\sigma_{\String_p}<1.$ $($Recall from Corollary~\ref{C:0 D sigma 1} that we always have $\sigma_{\String_p}\leq 1$.
Moreover, if $\String_p$ is self-similar, then $\sigma_{\String_p}<1.)$
Then the volume of the thin inner  $\eps$-neighborhood of $\String_p$ is given by the following distributional explicit formula, on test functions in $\mathbf{D}(0, \infty)$, the space of $C^{\infty}$ functions with compact support in $(0,\infty)\colon$
\begin{equation}\label{detf}
V_{\String_p}(\eps)= \sum_{ \omega \in \mathcal{D}_{\mathcal{L}_p} (W)}\res 
\left(
\frac{p^{-1} \zeta_{\mathcal{L}_p} (s) \eps^{1-s}} {1-s}; \omega
\right) 
+ \mathcal{R}_p (\eps),
\end{equation}
where $\mathcal D_{\String_p}(W)$ is the set of visible complex dimensions of $\String_p$ $($as given in Definition \ref{dvcd}$).$
Here, the distributional error term is given by 
\begin{equation}\label{error term}
\mathcal R_p(\eps)=\frac{1}{2\pi i}
\int_{S}\frac{p^{-1} \zeta_{\mathcal{L}_p} (s) \eps^{1-s}} {1-s}ds
\end{equation}
and is estimated distributionally $($in the sense of {\em \cite[Defn.~5.29]{L-vF2})} by
\begin{equation}\label{estimate}
\mathcal{R}_p(\eps)=O(\eps^{1-\sup S}), \quad \mbox{\em as} ~\eps \rightarrow 0^+.
\end{equation}

$($ii$)$ Moreover, if $\String_p$ is strongly languid $($as in the second part of Definition \ref{languid}$)$, then we can take $W=\Com$ and
$\mathcal{R}_p(\eps)\equiv 0$, provided we apply this formula to test functions supported on   compact subsets  of $[0, A)$. The resulting explicit formula without error term is often called an \emph{exact tube formula} in this case. 
\end{theorem}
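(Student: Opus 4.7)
My plan is to derive (\ref{detf}) from Theorem \ref{thin} by recasting $V_{\String_p}(\eps)$ as a Mellin--Barnes integral and then shifting its contour leftwards across the visible complex dimensions to the screen $S$. The first step is to establish the identity
\begin{equation*}
V_{\String_p}(\eps) = \frac{1}{2\pi i} \int_{c-i\infty}^{c+i\infty} \frac{p^{-1}\zeta_{\String_p}(s)\,\eps^{1-s}}{1-s}\, ds
\end{equation*}
for any $c$ with $\sigma_{\String_p} < c < 1$, which exists by the hypothesis $\sigma_{\String_p}<1$. Indeed, termwise Mellin inversion shows that for each length $p^{-n_j}$, the integral of $(p^{-n_j})^s\eps^{1-s}/(1-s)$ along $\Re(s)=c$ equals $p^{-n_j}$ when $p^{-n_j}<\eps$ and $0$ otherwise (close the contour to the right or to the left according to the sign of $\log(p^{-n_j}/\eps)$, picking up the residue of $1/(1-s)$ at $s=1$ in the former case). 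Since $c>\sigma_{\String_p}$, the series $\zeta_{\String_p}(s)=\sum_j(p^{-n_j})^s$ converges absolutely on the contour and may be exchanged with the integral; multiplying by $p^{-1}$ then reproduces the expression of Theorem \ref{thin}.

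Next, I would shift the contour from $\Re(s)=c$ to the screen $S$, which by assumption lies strictly to the left of $\Re(s)=1$. In the strip $S(t) < \Re(s) < c$, the only poles of the integrand are those of $\zeta_{\String_p}$, namely the visible complex dimensions $\omega \in \mathcal{D}_{\String_p}(W)$: the pole at $s=1$ of $1/(1-s)$ lies outside the strip because $c<1$, and $\zeta_{\String_p}$ is holomorphic for $\Re(s)>\sigma_{\String_p}$. The languid bounds $\mathbf{L1}$ and $\mathbf{L2}$ provide, through the sequence $\{T_n\}$, horizontal bridge integrals that vanish in the limit $n\to\infty$; combined with the residue theorem interpreted distributionally on $\mathbf{D}(0,\infty)$ following \cite[Thms.~5.26--5.27]{L-vF2}, this produces the sum of residues in (\ref{detf}) and leaves the integral along $S$ as the error $\mathcal{R}_p(\eps)$ of (\ref{error term}). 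The estimate (\ref{estimate}) follows because $|\eps^{1-s}| \leq \eps^{1-\sup S}$ on $S$, while pairing with a test function in $\mathbf{D}(0,\infty)$ turns the polynomial growth in $\mathbf{L2}$ into rapid decay in $|\Im(s)|$, so that the remaining integral converges absolutely in the distributional sense.

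For part (ii), I would repeat this procedure with the sequence of screens $S_m$ with $\sup S_m \to -\infty$ furnished by the strongly languid hypothesis, and appeal to $\mathbf{L2'}$: paired with a test function supported in a compact subset of $[0,A)$, the screen integral is dominated by quantities of order $A^{|S_m(t)|}\eps^{1-S_m(t)}$ that tend to zero uniformly as $\sup S_m \to -\infty$, since the support condition on $\eps$ ensures that the factor collapses to $0$ as $|S_m(t)|\to\infty$. Letting $m\to\infty$ gives $\mathcal{R}_p \equiv 0$, and the residue sum ranges over all of $\mathcal{D}_{\String_p}=\mathcal{D}_{\String_p}(\Com)$, yielding the exact tube formula. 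The main technical obstacle throughout is distributional bookkeeping: the Mellin--Barnes integrals typically fail to converge pointwise, so the contour shifts, residue extractions, and the limit in $m$ must all be carried out as equalities of distributions on $\mathbf{D}(0,\infty)$, for which the apparatus of \cite[Chap.~5]{L-vF2} is essential.
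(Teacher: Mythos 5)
Your argument is correct and is essentially the paper's: both start from Theorem \ref{thin}, represent $V_{\String_p}(\eps)$ via the Mellin kernel $p^{-1}\zeta_{\String_p}(s)\eps^{1-s}/(1-s)$, and obtain \eqref{detf} by moving the contour to the screen under the languid (resp.\ strongly languid) hypotheses, with all identities interpreted distributionally on $\mathbf{D}(0,\infty)$. The only difference is organizational: the paper pairs $V_{\String_p}$ with a test function $\varphi$, computes the Mellin transform $\widetilde{\Psi}(s)=\frac{p^{-1}}{1-s}\,\tilde{\varphi}(2-s)$ of the transformed test function, and then invokes the extended distributional explicit formulas of \cite[Thms.~5.26 and 5.27]{L-vF2} as a black box, whereas you re-derive the underlying Mellin--Barnes inversion and perform the contour shift directly --- which is precisely the argument those cited theorems encapsulate and make rigorous.
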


\begin{proof}
Since the proof of Theorem~\ref{dtf} parallels that of its counterpart for real fractal strings (see~\cite[Thm.~8.7]{L-vF2}),
 we only provide here the main steps.
We will explain,
 in particular,
 why the $p$-adic tube formula takes a different form than in the real case.
As will be clear from the proof,
 it all goes back to the difference between Theorem~\ref{thin} and its archimedean analogue (see Equation \eqref{LPformula} above or \cite[Equation~(8.1)]{L-vF2}).

According to Theorem~\ref{thin}, 
 \begin{align*} 
V_{\String_p}(\eps)=\frac1p\sum_{j\colon  p^{-n_j}< \eps} p^{-n_j}
=\frac1p\int_{1/\eps}^{\infty} \frac{1}{x}\,\eta(dx)
=\langle\mathcal P_{\eta}^{[0]}, v_{\eps}\rangle,
  \end{align*}
  where $\mathcal P_{\eta}^{[0]} =   \eta:=\sum_{j=1}^{\infty}\delta_{ \{ {p^{n_j}} \}}$ is viewed as a distribution   and 
  \[
v_{\eps}(x):=\begin{cases}
0		&\mbox{ if } x\leq 1/{\eps} \\
1/({px})	&\mbox{ if } x>1/{\eps}.
\end{cases}
\]

Fix $\varphi \in  \mathbf {D}(0,\infty)$.
Then
 \begin{align*}
\int_0^{\infty}\varphi(\eps)v_{\eps}(x)\,d\eps
&=\frac1{px}\int_{1/x}^{\infty}\varphi(\eps)\,d\eps \\
&=p^{-1}\varphi_1(x),
\end{align*} 
where $\varphi_1$ is a smooth,
but not compactly supported,
 test function,
 given by 
\[
\varphi_1(x):=\frac{1}{x}\int_{1/x}^{\infty}\varphi(\eps)d\eps.
\]
Thus
 \begin{align}
\langle V_{\String_p}(\eps), \varphi \rangle
&=\int_0^{\infty}\varphi(\eps) \int_0^{\infty}v_{\eps}(x)\,\eta(dx) d\eps\nonumber\\
&=\Bigl\langle \mathcal P_{\eta}^{[0]},p^{-1}\varphi_1(x)\Bigr\rangle.\label{e3}
\end{align}

The Mellin transform of $\varphi_1$ is computed to be
 \begin{gather}
\tilde{\varphi}_1(s)=\frac{1}{1-s}\,\tilde{\varphi}(2-s)\quad\text{ for }\Re s<1\label{e2}. 
\end{gather}
Furthermore,
 by analytic continuation,
 and since $ \tilde{\varphi}(s)$ is entire for
 $\varphi\in   \mathbf {D}(0,\infty) $,
 the equality in \eqref{e2} continues to hold for all  $s\in \Com.$

Now,
 let $ \Psi =p^{-1}\varphi_1$.
Its Mellin transform  is 
\[
\widetilde{\Psi}(s)=\frac{p^{-1}}{1-s}\,\tilde{\varphi}(2-s),
\]
which holds for all $s\in \Com$.
Note that it follows from our previous discussion that $\widetilde{\Psi}(s)$ is meromorphic in all of $\Com$,
 with a single,
 simple pole at $s=1$. 

Next, we deduce from \eqref{e3} and~\cite[Thm.~5.26]{L-vF2} (the extended distributional explicit formula) that
\begin{align*}
\langle V_{\String_p}(\eps), \varphi \rangle
&=  \sum_{ \omega \in \mathcal D_{\String_p}}
 \res\left(  \zeta_{\String_p}(s) \widetilde{\Psi}(s); \omega\right) + 
 \Rerror_p (\eps) \\
&=\int_0^{\infty} \sum_{ \omega \in \Dimensions_{\String_p}} \res \left(\frac{\zeta_{\String_p} (s)\eps^{1-s}} {p(1-s)}; \omega\right)\varphi(\eps)d\eps +\int_0^{\infty}\Rerror_p (\eps)\varphi(\eps)\,d\eps.
\end{align*}
Therefore, 
\[
V_{\String_p}(\eps)= \sum_{ \omega \in \mathcal D_{\String_p} (W)} \res \left (\frac{\zeta_{\String_p} (s) \eps^{1-s}} {p(1-s)}; \omega\right)+ \Rerror_p (\eps),
\]
where the distribution $\Rerror_p (\varepsilon)$ is given by \eqref{error term} and is estimated distributionally as in \eqref{estimate}. 

In closing this proof,
 we note that in the strongly languid case,
 we use~\cite[Thm.~5.27]{L-vF2} in order to conclude that (\ref{dtf}) holds with $\Rerror_p(\eps)\equiv0.$ 
 \end{proof}

\begin{remark}\label{tzf}
We may rewrite the (typically infinite) sum in (\ref{detf}) as follows:
\begin{equation}\label{sum residue}
\sum_{ \omega \in \mathcal{D}_{\mathcal{L}_p} (W)} \res(
\zeta_{\String_p}(\eps; s); s=\omega), 
\end{equation}
where (by analogy with the definitions and results in~\cite{LapPe1,LPW}),
\begin{equation}\label{tubular zeta}
\zeta_{\String_p}(\eps; s):=\frac{p^{-1}\zeta_{\String_p}(s)\eps^{1-s}}{1-s}
\end{equation}
is called the \emph{nonarchimedean tubular zeta function} of the $p$-adic fractal string $\String_p$.

By contrast,
 the archimedean tubular zeta function (in the present one-di\-men\-sional situation) of a real fractal string $\String$ is given by 
\begin{equation}\label{real tubular zeta}
\zeta_{\String}(\eps; s):=\frac{\zeta_{\String}(s)(2\eps)^{1-s}}{s(1-s)},
\end{equation}
and the analog of the above sum in the archimedean tube formula of~\cite{L-vF2} (as rewritten in~\cite{LapPe1}) is given as in~\eqref{sum residue},
except with $\String_p$ replaced by $\String$ and with 
$\mathcal D_{\String}(W)\cup \{0\}$ instead of $\mathcal D_{\String_p}(W)$.
Note that $\zeta_{\String}(\eps; s)$ typically has a pole at $s=0$,
whereas
$\zeta_{\String_p}(\eps; s)$ does not.
\end{remark}

\begin{corollary}[$p$-adic fractal tube formula]\label{ftf}
 If, in addition to the hypotheses in Theorem \ref{dtf}, we assume that all the visible complex dimensions of 
 $\mathcal{L}_p$ are  simple, then  
\begin{equation}\label{pftf} 
V_{\String_p}(\eps)= \sum_{ \omega \in \mathcal{D}_{\String_p}(W)} 
c_{\omega} \frac{ \eps^{1-\omega}} {1-\omega } 
+ \mathcal{R}_p (\eps),
\end{equation}
where $c_{\omega}=p^{-1}\res\left(\zeta_{\String_p}; \omega\right)$. 
Here, the error term $\mathcal R_p$ is given by (\ref{error term})
and is estimated by (\ref{estimate}) in the languid case. 
Furthermore, we have $ \mathcal{R}_p (\eps) \equiv 0$ in the strongly languid case, provided we choose $W=\Com$. 
\end{corollary}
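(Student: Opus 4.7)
The plan is to deduce the corollary directly from the distributional tube formula established in Theorem~\ref{dtf}, by simply evaluating each residue under the extra hypothesis that every visible complex dimension is simple. No new analytic machinery should be needed; the work is the residue calculus together with a short verification that the factor $1/(1-s)$ does not interact with the poles of $\zeta_{\String_p}$.

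First I would take as a starting point the identity
\begin{equation*}
V_{\String_p}(\eps)= \sum_{ \omega \in \mathcal{D}_{\mathcal{L}_p} (W)}\res
\left(
\frac{p^{-1} \zeta_{\mathcal{L}_p} (s) \eps^{1-s}} {1-s}; \omega
\right)
+ \mathcal{R}_p (\eps),
\end{equation*}
from Theorem~\ref{dtf}, valid distributionally. I would then observe that since Theorem~\ref{dtf} assumes $\sigma_{\String_p}<1$, and since $\mathcal D_{\String_p}(W)\subset\{s\in\Com\colon \Re(s)\le \sigma_{\String_p}\}$, we have $\omega\neq1$ for every visible complex dimension $\omega$. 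Hence the factor $1/(1-s)$ is holomorphic and nonzero at each such $\omega$, and the function $s\mapsto \eps^{1-s}=\eps\cdot e^{-s\log\eps}$ is entire in $s$.

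Under the assumption that $\omega$ is a simple pole of $\zeta_{\String_p}$, the integrand in the residue has a simple pole at $\omega$ with residue given by multiplying the residue of $\zeta_{\String_p}$ by the value of the remaining holomorphic factor:
\begin{equation*}
\res\left(\frac{p^{-1}\zeta_{\String_p}(s)\eps^{1-s}}{1-s};\omega\right)
=p^{-1}\res(\zeta_{\String_p};\omega)\cdot\frac{\eps^{1-\omega}}{1-\omega}
=c_\omega\,\frac{\eps^{1-\omega}}{1-\omega},
\end{equation*}
with $c_\omega$ as in the statement. Substituting back into the formula from Theorem~\ref{dtf} gives~\eqref{pftf}. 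The estimate~\eqref{estimate} on $\mathcal R_p(\eps)$ in the languid case, as well as the vanishing $\mathcal R_p(\eps)\equiv0$ in the strongly languid case with $W=\Com$ (provided test functions are supported in compact subsets of $[0,A)$), are inherited verbatim from Theorem~\ref{dtf}.

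The only step that requires any caution is the bookkeeping around $s=1$: one must notice that the simplicity assumption together with $\sigma_{\String_p}<1$ rules out a double pole coming from collision of a pole of $\zeta_{\String_p}$ with the pole of $1/(1-s)$, so the clean factorization above is legitimate and the sum~\eqref{pftf} converges (distributionally) in the same sense as in Theorem~\ref{dtf}. I do not anticipate any genuine obstacle beyond this verification; the corollary is essentially a residue computation applied to the theorem.
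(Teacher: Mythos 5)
Your proposal is correct and follows exactly the route the paper intends: the corollary is an immediate residue computation from Theorem~\ref{dtf}, using that each $\omega$ is a simple pole of $\zeta_{\String_p}$ and that the factor $p^{-1}\eps^{1-s}/(1-s)$ is holomorphic at $\omega$ because $\sigma_{\String_p}<1$ forces $\omega\neq 1$. The paper offers no separate proof of the corollary, so your careful check that the pole of $1/(1-s)$ cannot collide with a complex dimension is exactly the right (and only) point needing verification.
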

 
 \begin{remark}
 In~\cite[Ch.~8]{L-vF2}, under different sets of assumptions, both distributional and pointwise tube formulas are obtained for archimedean fractal strings (and also, for archimedean self-similar fractal strings). 
 (See, in particular, Theorems 8.1 and 8.7, along with \S8.4 in~\cite{L-vF2}.)
 At least for now, in the nonarchimedean case, we limit ourselves to discussing distributional explicit tube formulas. We expect, however, that under appropriate hypotheses, one should be able to obtain a pointwise fractal tube formula for $p$-adic fractal strings and especially, for $p$-adic self-similar strings. 
 In fact, for the simple examples of the nonarchimedean Cantor, Euler and Fibonacci strings, the direct derivation of the fractal tube formula (\ref{pftf})
 yields a formula that is valid pointwise and not just distributionally.
(See, in particular,
\S\ref{euler volume} and Example~\ref{cantor volume}.)
We leave the consideration of such possible pointwise extensions to a future work.
 \end{remark}
 
 \begin{example}[Fractal tube formula for the $p$-adic Euler string]\label{eulerrevisit}
 We now explain how to recover from Theorem \ref{dtf} (or Corollary \ref{ftf}) the tube formula for the Euler string $\mathcal E_p$ obtained via a direct computation in \S\ref{euler volume}.
 Indeed, it follows from Corollary \ref{ftf} (applied with $W=\Com$)
  that 
 \begin{equation}\label{etf}
 V_{\mathcal E_p}(\eps)= \frac{1}{p}\sum_{\omega \in \mathcal D_{\mathcal E_p} }
 \res(\zeta_{\mathcal E_p}; \omega)
   \frac{\eps^{1-\omega}}{ 1-\omega },
 \end{equation}
 which is exactly the expression obtained for $ V_{\mathcal E_p}(\eps)$ in formula 
 (\ref{VolumeEuler}) of \S\ref{euler volume} since
\[
\res(\zeta_{\mathcal E_p}; \omega)=\frac{1}{\log p}
\]
 for all $\omega \in \mathcal D_{\mathcal E_p}.$
  (This latter observation follows easily from the expression of $\zeta_{\String_p}$ obtained in Equation~\eqref{ptheuler}.)
  Note that Corollary \ref{ftf} can be applied here in the strongly languid case when $W=\Com$ and 
  $ \mathcal{R}_p (\eps) \equiv 0$  since, in light of the discussion in \S\ref{Euler string}, all the complex dimensions of $\mathcal E_p$ are simple and $\zeta_{\mathcal E_p}$ is clearly strongly languid of order $\kappa: =0$ and with the constant $A:=p^{-1}.$
  Furthermore, formula (\ref{etf}) can be rewritten in the following more concrete form:
   \begin{equation}\label{cetf}
 V_{\mathcal E_p}(\eps)= \frac{1}{p\log p}\sum_{n\in \Z}
   \frac{\eps^{1-in\mathbf{p}}}{ 1-in\mathbf{p} },
 \end{equation}
 since
$ \mathcal D_{\mathcal E_p}=\{in\mathbf{p} : n\in \Z\}$ and $\mathbf{p}=2\pi/\log p$ 
  (as in Equation (\ref{cdes}) of \S\ref{Euler string}).
  
  Finally, note that since the  series 
\[
  \sum_{n\in \Z}
   \frac{\eps^{1-in\mathbf{p}}}{ 1-in\mathbf{p} }
\]
   converges pointwise because the associated Fourier series 
   $ \sum_{n\in \Z}
   \frac{e^{2\pi in x}}{ 1-in\mathbf{p} }$ 
   is pointwise convergent on $\R$, it follows that the $p$-adic fractal tube formulas (\ref{etf})--(\ref{cetf}) actually converge pointwise rather than just distributionally. 
  \end{example}

\begin{example}[The tube formula for the nonarchimedean Cantor string]\label{cantor volume}
In this example,
 we explain how to derive the exact fractal tube formula for $\mathcal{CS}_3$, the 3-adic Cantor string introduced in \cite{LapLu1} and further studied in \cite{LapLu2,LapLu2.5,LapLu3}.

By construction, the complement of $\mathcal{CS}_3$ in $\Z_3$ is the 3-adic Cantor set $\mcc_3$, which is a nonarchimedean self-similar set (as introduced in \cite{LapLu1, LapLu2}); so that $\mcc_3$ is the unique nonempty compact subset $K$ of $\Z_3$ (or of $\Q_3$) which is the solution of the fixed point equation

\begin{equation}\label{6.12}
K= \varphi_1(K) \cup \varphi_2 (K),
\end{equation}
for some suitable affine similarity transformations $\varphi_1, \varphi_2$ from $\Z_3$ to itself; more specifically, we have that $\varphi_1 (x) = 3x$ and $\varphi_2 (x) = 2 + 3x$, for all $x \in \Z_3$. We refer to \cite{LapLu1,LapLu2,LapLu2.5} for more information concerning the properties of $\mcc_3$ and $\mathcal{CS}_3$ as well as for corresponding figures. (See \cite{Hut} for the general definition of self-similar sets in complete metric spaces, and \cite{Fal} for a detailed discussion in the usual case of Euclidean spaces.)

Let $\eps >0$. We have that

\[ \zeta_{\mathcal{CS}_3} (s) = \frac{3^{-s}}{1 - 2\cdot3^{-s}}, \text{ for all } s \in \Com\] and hence 

\[ \Dimensions_{\mathcal{CS}_3} = \{ D + i \nu \textbf{p}~ |~ \nu \in \Z\}, \]
will all the complex dimensions being simple and where $D:= \log_3 2$ and $\textbf{p} := 2 \pi/\log3 $. Furthermore, we have that

\[
\res(\zeta_{\mathcal{CS}_3}; \omega)=\frac{1}{2\log3},
\]
 independently of $\omega \in \mathcal{D}_{\mathcal{CS}_3 }$,
 and so the exact fractal tube formula for the nonarchimedean Cantor string is found to be 
\begin{equation}\label{tfc}
V_{\mathcal{CS}_3}(\eps)=
\frac{1}{ 3} \sum_{\omega \in \mathcal D_{\mathcal {CS}_3}}\res(\zeta_{\mathcal{CS}_3}; \omega)\frac{\eps^{1-\omega  }}{1-\omega}.
\end{equation}

Note that since $\mathcal{CS}_3$ has simple complex dimensions,
 we may also apply Corollary \ref{ftf} (in the strongly languid case when $W=\Com$) in order  to precisely recover Equation~\eqref{tfc}.

We may rewrite~\eqref{tfc} in the following form:
\[
V_{\mathcal{CS}_3}(\eps)=\eps^{1-D}G_{\mathcal{CS}_3}(\log_{3}\eps^{-1}),
\]
where $G_{\mathcal{CS}_3}$ is the nonconstant periodic function (of period 1) on $\R$ given by 
\[
G_{\mathcal{CS}_3}(x):=\frac{1}{6\log3}\sum_{n\in \Z}\frac{e^{2\pi inx}}{1-D-in\mathbf{p}}.
\]
Finally,
 we note that since the Fourier series 
\[
\sum_{n\in \Z}\frac{e^{2\pi inx}}{1-D-in\mathbf{p}}
\]
is pointwise convergent on $\R$,
 the above direct computation of $V_{\mathcal{CS}_3}(\eps)$
shows that~\eqref{tfc} actually holds pointwise rather than just distributionally.

In closing this example, we note that we could similarly use Theorem \ref{dtf} (or Corollary \ref{ftf}) to obtain an exact fractal tube formula for the $5$-adic Cantor string recently introduced in \cite{KRC} and defined in a way analogous to the $3$-adic Cantor string from \cite{LapLu1}.
\end{example}

\begin{remark}\label{R6.6.1/2}
The $3$-adic Cantor string discussed in Example \ref{cantor volume} is an example of a $p$-adic (here, 3-adic) self-similar string. Another example of a $p$-adic (or nonarchimedean) self-similar string is the 2-adic Fibonacci string, whose complex dimensions are distributed periodically along two vertical lines (instead of a single one as in the case of a 3-adic Cantor string). (See \cite{LapLu2,LapLu2.5}; furthermore, see \cite{LapLu3} for the corresponding exact pointwise tube formula.) In general, a (nontrivial) $p$-adic self-similar string $\mcl_p$ is always lattice (that is, its scaling ratios are all integer powers of a single number, necessarily $p$; see \cite{LapLu2, LapLu2.5}. Therefore, unlike for real (or archimedean) fractal strings (compare with \cite[Chs. 2 \& 3]{L-vF2}), which can be either lattice or nonlattice, the complex dimensions of $\mcl_p$ are always periodically distributed along finitely many vertical lines, the right most of which is the vertical line $\{ \mfr s = D_M \}$, where $D_M$ is the Minkowski dimension of $\mcl_p$. The corresponding fractal tube formulas, illustrating our main theorem in this section (Theorem \ref{dtf}) in order to obtain fractal tube formulas for general $p$-adic self-similar fractal strings, are provided in \cite{LapLu3}.

In order to avoid unnecessary repetitions, we refer the interested reader to \cite{LapLu3} (and \cite{LapLu2.5}) for those special but important examples of fractal tube formulas. We only mention the following two interesting facts: 
\begin{itemize}
\item[(i)] Because on each relevant vertical line, the complex dimensions form an arithmetic progression (with a progression or period independent of the line) and have the same multiplicities, the corresponding term in the associated fractal tube formula can be written as a suitable power function times a periodic function (of $x := \log (\varepsilon^{-1})$). (This is so assuming that the complex dimensions on that line are simple, which is always the case, for instance, of the right most vertical line $\{\mfr s = D_M \}$.)
\item[]
\item[(ii)] In all of the concrete examples of $p$-adic self-similar strings studied in \cite{LapLu2.5, LapLu3}, including the 3-adic Cantor string and the 2-adic Fibonacci string, the corresponding exact fractal tube formula can be shown to converge {\em pointwise} (rather than distributionally, as in 
Theorem \ref{dtf}). We conjecture that at least in the case of simple complex dimensions, the exact fractal tube formula of a $p$-adic self-similar string always converges pointwise (and not just distributionally, as in Theorem \ref{dtf}). (Such a result is established in \cite[\S 8.4]{L-vF2} for general real or archimedean self-similar strings, whether or not all of the complex dimensions are simple.) Accordingly, it would be very interesting to establish that conjecture as well as to obtain a pointwise counterpart of Theorem \ref{dtf}; that is, a fractal tube formula for $p$-adic (not necessarily self-similar) fractal strings, with or without an error term, which (under suitable hypotheses) would be valid pointwise. We note that in the archimedean case (i.e., for real fractal strings) such a pointwise fractal tube formula is available under rather general conditions; see \cite[\S 8.1.1, esp., Th. 8.7 \& Cor. 8.10]{L-vF2}. We leave the investigation of these issues to some future work or to the interested reader.
\end{itemize}
\end{remark}

\section{Possible Extensions}
\label{towards}

We close this paper by providing possible directions for future investigations in this area. In Remark \ref{R6.6.1/2}, we have already mentioned the problem of obtaining a pointwise fractal tube formula, analogous to our distributional fractal tube formula (Theorem \ref{dtf}) in the nonarchimedean case and to the pointwise tube formula obtained in the archimedean case in \cite[\S 8.1.1]{L-vF2} (for general real fractal strings, under suitable hypotheses) and (without any assumptions) in \cite[\S 8.4]{L-vF2} for general real self-similar fractal strings. We next point out other possible problems and research directions.

 \subsection{Ad\`elic Fractal Strings and Their Spectra} \label{adelic}

It would be interesting to unify the archimedean and nonarchimedean settings by appropriately defining \emph{ad\`elic} fractal strings, and then studying the associated spectral zeta functions (as is done for standard archimedean fractal strings in~\cite{Lap,Lap3} and~\cite{LapMa,LapPo,LapPo3,L-vF1,L-vF2}).
To this aim,
the spectrum of these ad\`elic fractal strings should be suitably defined and its study may benefit from Dragovich's work~\cite{Drag} on ad\`elic quantum harmonic oscillators. In the process of defining these ad\`elic fractal strings, we expect to make contact with the notion of a fractal membrane (or ``quantized fractal string'') introduced in~\cite[Ch.~3]{Lap2} and rigorously constructed in~\cite{LapNe} as a Connes-type noncommutative geometric space \cite{Con}; see also~\cite[\S 4.2]{Lap2}.
The aforementioned spectral zeta function of an ad\`elic fractal string would then be viewed as the (completed) spectral partition function of the associated fractal membrane,
 in the sense of~\cite{Lap2}.
(See also Remark~\ref{harmonic string} above.)
We note that a geometric construction of certain ad\`elic fractal strings is proposed in the epilogue (\S8) below.

\subsection{Nonarchimedean Fractal Strings in Berkovich Space} \label{berkovich}

    As was shown in \cite{LapLu2} and recalled in Remark \ref{R6.6.1/2}, there can only exist \emph{lattice} $p$-adic  self-similar strings, because of the discreteness of the valuation group of $\Q_p$. However, in the archimedean setting, there are both lattice and nonlattice self-similar strings; see \cite[Chs. 2 \& 3]{L-vF2}. We expect that by suitably extending the notion of  
    $p$-adic self-similar  string to Berkovich's $p$-adic analytic space~\cite{Ber,Duc},
it can be shown that $p$-adic  self-similar strings are generically nonlattice in this broader setting.
Furthermore, we conjecture that every nonlattice string in the Berkovich projective line  can be approximated by lattice strings with increasingly large oscillatory periods (much as occurs in the archimedean case~\cite[Ch.~3]{L-vF2}).
Finally, we expect that, by contrast with what happens for $p$-adic fractal strings, the volume 
$V_{\String_p}(\eps)$ will be a continuous function of $\eps$ in this context. (Compare with Remark \ref{real vs. p-adic volume}.)

\subsection{Higher-Dimensional Fractal Tube Formulas} \label{higher}

We expect that the higher-dimensional tube formulas obtained by Lapidus and Pearse in~\cite{LapPe1,LapPe2} (as well as, more generally, by those same authors and Winter in~\cite{LPW}) for archimedean self-similar sprays and the associated tilings~\cite{Pe} in $\R^d$ have a natural nonarchimedean counterpart in the $d$-dimensional $p$-adic space $\Q_p^d,$ for any integer $d\geq 1$. In the latter $p$-adic case, the corresponding `tubular zeta function'
 $\zeta_{\mathcal T_p}(\eps; s)$ (when $d=1$, see Remark \ref{tzf}) should have a more complicated expression than in the one-dimensional situation, and should involve both the inner radii and the `curvature' of the generators (see~\cite{LapPe1,LPW}, as described in \cite[\S 13.1]{L-vF2}, for the archimedean case) of the tiling (or $p$-adic fractal spray) $\mathcal T_p.$
Moreover,
by analogy with what is expected to happen in the Euclidean case~\cite{LapPe1,LPW},
the coefficients of the resulting higher-dimensional tube formula should have an appropriate interpretation in terms of yet to be suitably defined  `nonarchimedean fractal curvatures' associated with each complex and integral dimension of $\mathcal T_p$.  Finally, by analogy with the archimedean case (for $d\geq1$,
see~\cite{LapPe1} and~\cite{LPW}),
the $p$-adic higher-dimensional fractal tube formula should take the same form as in Equation~\eqref{sum residue},
except with $\zeta_{\String_p}(\eps; s)$ given by a different expression from the one in (\ref{tubular zeta}) where $d=1$,
 and with $\mathcal D_{\String_p}(W)$ replaced by 
 $ \mathcal D_{\String_p}(W)\cup \{0, 1, \ldots, d\}$, as well as (for nonarchimedean self-similar tilings) with $W=\Com$ and $\mathcal R_p(\eps)\equiv 0$ in the counterpart of Equation (\ref{tubular zeta}) or (\ref{real tubular zeta}). 
 In the future, we plan to investigate the above problems along with related question pertaining to fractal geometry and geometric measure theory in nonarchimedean spaces.

In closing \S\ref{higher}, we mention that  recently, the first author, Goran Radunovi\` c and Darko \u Zubrinic have developed a general theory of fractal zeta functions and complex dimensions  valid in Euclidean spaces $\R^N$ of any dimension and for arbitrary bounded subsets of $\R^N$ (see, e.g., the  book \cite{LapRaZu1}). In the process, they have very significantly extended the theory of fractal tube formulas obtained originally for fractal strings in \cite{L-vF1, L-vF2} and then for higher-dimensional fractal sprays (especially, self-similar sprays) in \cite{LapPe1, LapPe2, LPW}; see, especially, \cite{LapRaZu2, LapRaZu3} and \cite[Ch. 5]{LapRaZu1}. Accordingly, it is natural to wonder whether the general theory of fractal zeta functions and fractal tube formulas developed in \cite{LapRaZu1} and \cite{LapRaZu2, LapRaZu3,LapRaZu3.5,LapRaZu4} can be applied and suitably adapted in order to obtain concrete nonarchimedean tube formulas valid (under appropriate hypotheses) for arbitrary compact subsets of $p$-adic space $(\Q_p)^N$ (or more general ultrametric spaces), and, in particular, for arbitrary $p$-adic self-similar sets in $(\Q_p)^N$.

\section{Epilogue}\label{epilogue}
In looking for a simple geometric way to create an ad\`elic fractal string and a global theory of complex fractal dimensions, 
we found a very natural construction of $p$-adic fractal strings of any rational dimension between 0 and 1. 
The simplest example is of dimension $D=\frac{1}{2}$, which is particularly interesting since it involves the diagonal of the digits. 
This reminds one of the intersections of the graph of the Frobenius with the diagonal in Enrico Bombieri's proof of the Riemann hypothesis for curves over finite fields; see \cite{Bom, MvF}.
It may give rise to a fractal approach to translating his proof for curves over finite fields to the curve spec ${\mathbb Z}$ over the rationals, which is the case of the famous Riemann hypothesis for the Riemann zeta function. 
However, we caution the reader that this possibility is far from being realized for now. 

We found another natural way to create an infinite family of $p$-adic Cantor strings $\mathcal{CS}_p$ in the nonarchimedean ring of $p$-adic integers $\mathbb Z_p$ and simultaneously their exact counterparts in the archimedean unit interval $[0,1]$, the $p$-inary Cantor strings $\mathcal{CS}_p^*$.
The Minkowski dimensions of the nonarchimedean and archimedean Cantor strings vary  from 0 to 1 as $p$ varies from 2 to $\infty.$
Directly above and below the Minkowski dimension  lie infinitely many complex fractal dimensions, periodically distributed along a discrete vertical line. 
The periodic distribution of the complex fractal dimensions, being discrete near dimension $0$, become denser as the Minkowski dimension tends to 1.

The simplest way to unify all infinitely many $p$-adic Cantor strings $\mathcal{CS}_p$ together with the ordinary real Cantor string $\mathcal{CS}$ is to form an  infinite product 
\[\mathcal{CS}\times \prod_{p<\infty} \mathcal{CS}_p,\]
which is a self-similar string in the set of integral ad\`eles $\mathbb A_{\mathbb Z}$.

An even more harmonious and symmetric way to unify all the nonarchimedean Cantor strings together with their corresponding archimedean counterpart is first to pair each $p$-adic Cantor string $\mathcal{CS}_p$ together with the $p$-inary Cantor string $\mathcal{CS}_p^*$ by taking the Cartesian product 
$\mathcal{CS}_p\times\mathcal{CS}_p^*\subset \mathbb{Q}_p \times \mathbb R$. Then we can imagine the  infinite direct product 
\[
\prod_{p< \infty} (\mathcal{CS}_p\times\mathcal{CS}_p^*)\]
as being an `ad\`elic' Cantor string in a new `ad\`elic' space 
\[
\prod_{p< \infty} (\mathbb Q_p\times \mathbb R)
\]
with infinitely many archimedean components.\footnote
{We put `ad\`elic' in quotes because this space has infinitely many real components,
one for every prime number,
whereas the ring of ad\`eles has only one real component,
corresponding to the archimedean valuation of $\mathbb Q$.}

We note, however, that our constructions of ad\`elic fractal strings do not give the Riemann zeta function as the geometric zeta function. 
It would be interesting to have a natural construction of an ad\`elic fractal string with the Riemann zeta function as its geometric zeta function.

We conclude these comments with a construction that gives the square of the Riemann zeta function.  
Let $\mathcal E_p$ be the $p$-adic Euler string and $h$ be the real harmonic string, 
then the infinite direct product 
\[h\times \prod_{p<\infty}\mathcal E_p\]
can be considered as an ad\`elic fractal string in the set of integral ad\`eles $\mathbb A_{\mathbb Z}$. 
Let $\zeta_{\mathcal E_p}$ be the geometric zeta function of  $\mathcal E_p$ and $\zeta_h$ be the geometric zeta function of the harmonic string; then, the infinite product of complex meromorphic functions
\[
\zeta_h\times
\prod_{p<\infty}\zeta_{\mathcal E_p}
\]
is equal to the square of the Riemann zeta function. 


\section*{Acknowledgments} 

We wish to thank Springer, the publisher of \cite{L-vF2},  for having granted us 
(more specifically, the two authors of the book \cite{L-vF2})
the copyright for \cite[Section 13.2]{L-vF2} within which part of the results obtained in this paper were discussed.
That portion of \cite[Section 13.2]{L-vF2}
was referring,
 in particular,
  to an earlier preprint of this article to which we have since then made a number of changes and additions,
 including the new result providing 
 (in \S\ref{Mdimension}), among other things, 
 a full synthetic proof of the equality of the abscissa of convergence of the geometric zeta function and of the (upper) Minkowski dimension of the associated fractal set, valid both in the real (or archimedean) case and in the 
$p$-adic (or nonarchimedean) case.
 Another significant addition to our earlier version of this paper is the discussion (in \S\ref{epilogue}) of several results concerning the new topic of ad\`elic fractal strings, for which complete proofs will be provided in a later work towards a global theory of complex fractal dimensions.

\bibliographystyle{amsplain}

\end{document}